\newtheorem{theorem}{Theorem}[section]
\newtheorem{lemma}[theorem]{Lemma}
\newtheorem{corollary}[theorem]{Corollary}
\newtheorem{claim}[theorem]{Claim}
\newtheorem{obs}[theorem]{Observation}
\newtheorem*{mainlemma}{Lemma \ref{lemma:main}}
\newtheorem*{maintheorem}{Theorem \ref{thm:main}}
\newcommand{\eps}{\epsilon}
\newcommand{\C}{\mathcal{C}}
\newcommand{\R}{\mathcal{R}}
\newcommand{\poly}{\operatorname{poly}(n)}
\newcommand{\red}{\mathtt{red}}
\newcommand{\blue}{\mathtt{blue}}
\newcommand{\green}{\mathtt{green}}
\newcommand{\mc}[1]{\mathcal{M}\mathcal{C} #1}
\newcommand{\states}{\Phi}
\newcommand{\length}{\ell}
\newcommand{\paths}{\mathcal{P}}
\newcommand{\ignore}[1]{}
\newcommand{\mvs}{\operatorname{MVS}}
\newcommand{\vsn}{\operatorname{VSN}}
\newcommand{\ca}{\alpha}
\newcommand{\cb}{\beta}
\newcommand{\sigsig}{\sigma, \sigma'}
\newcommand{\cc}{\ca,\cb}
\newcommand{\tw}{\operatorname{tw}}
\newcommand{\pw}{\operatorname{pw}}
\newcommand{\cp}{\mathtt{pairs}}
\newcommand{\Q}{QS}
\newcommand{\mono}{MCE(G)}
\newcommand{\lgg}{L}
	\title{Randomly  coloring graphs of bounded treewidth}
		\author{ Shai Vardi%
		\thanks{California Institute of Technology, Pasadena, CA, 91125, USA. E-mail: {\tt  svardi@caltech.edu}. }}
\begin{document}
		
\maketitle

 \begin{abstract}  We consider the problem of sampling a proper  $k$-coloring of a graph of maximal degree $\Delta$ uniformly at random. We describe a new Markov chain for sampling colorings, and show that it mixes rapidly on graphs of bounded treewidth if $k\geq(1+\eps)\Delta$, for any $\eps>0$.
\end{abstract}

\section{Introduction}

A (proper) $k$-coloring of a graph $G=(V,E)$ is an assignment $\sigma:V \rightarrow \{1, \ldots, k\}$ such that neighboring vertices have different colors. We consider the problem of sampling (almost) uniformly at random from the space of all $k$-colorings of a graph.\footnote{We define precisely what we mean by "almost" in Section~\ref{sec:markovprelims}.} The problem has received  considerable attention from the computer science community in recent years, 
e.g.,~\cite{Jerrum95,Dyer10,Vigoda99,HayesV03,Goldberg10,Tetali10,Lu17}. It also has applications in Combinatorics (e.g.,~\cite{Cayley}) and Statistical Physics (e.g.,~\cite{Stat}). 

Sampling colorings (as well as other combinatorial objects, e.g.,~\cite{JerrumS89,MorrisS04,Brightwell}) is commonly done using  Markov Chain Monte Carlo (MCMC) methods.  
A large body of work on sampling colorings is devoted to analyzing a particular Markov chain,  known as \emph{Glauber dynamics}: Choose a vertex $v$ uniformly at random; choose a color $c$ uniformly at random from the set of available colors (the complement of the set of colors of the neighbors of $v$); recolor $v$ with $c$.
Jerrum~\cite{Jerrum95} showed that the Glauber dynamics mix in time $O(n \log{n})$ when $k>2\Delta$, where $\Delta$ is the maximal degree of the graph.
Vigoda~\cite{Vigoda99} improved the bound on the number of colors to $k> 11\Delta/6$ using a different Markov chain, and showed that it mixes in time $O(nk\log{n})$.  
This remains the best known bound on $k$ for general graphs. A major open question is for what values of $k$ can we sample colors efficiently (i.e., in polynomial time)? It is conjectured (e.g.,~\cite{survey}) that $k=\Delta+2$ colors suffice, and furthermore, that the Glauber dynamics mix rapidly for any $k \geq \Delta+2$.

A lot of  work has  focused on improving the bounds of Vigoda on restricted families of graphs. Dyer and Frieze~\cite{Dyer03} showed that if the maximal degree and girth are $\Omega(\log{n})$, the Glauber dynamics mix in $O(n\log{n})$ for $k>\alpha\Delta$, where $\alpha \approx 1.763$. The degree and girth requirements and the value of $\alpha$ were improved in a  line of works~\cite{HayesV03,DyerFHV13,Hayes03,HayesV05,Molloy04}; see Table~\ref{table1} for a comparison and summary of some milestones.  The current state of the art results exhibit a tradeoff between the value of $\alpha$ and the degree and girth requirements. Hayes and Vigoda~\cite{HayesV03} showed that  on graphs with $\Delta=\Omega{(\log{n})}$ and girth at least $9$, $(1+\eps)\Delta$ colors suffice to ensure fast mixing. On the other hand, the Glauber dynamics have been shown to mix rapidly  on graphs with girth at least $5$ (resp. $6$) and $\Delta>\Delta_0$ (where $\Delta_0$ is some absolute constant) using roughly $1.763\Delta$ (resp. $1.489\Delta$) colors~\cite{DyerFHV13}. Thus far, stronger bounds have only been shown on highly specialized families of graphs, typically with chromatic number smaller than the maximal degree, such as trees~\cite{Martinelli}, planar graphs~\cite{Hayes07}, Erd\"{o}s-R\'{e}nyi graphs~\cite{Dyer10} and cubic graphs~\cite{Lu17}.

\subsection{Results}

Our main result is  an algorithm that efficiently  samples a $((1+\eps)\Delta
)$-coloring  (almost) uniformly at random if the input graph has  logarithmically bounded pathwidth,  for any $\eps>0$.\footnote{Assuming $(1+\eps)\Delta \geq \Delta+2$.}

\begin{theorem}\label{thm:main} (Informal)
	Let $\eps>0$ and $G$ be a graph with maximal degree $\Delta$ and  pathwidth bounded by $O(\log{n})$. There exists a polynomial time algorithm for sampling a $((1+\eps)\Delta)$-proper coloring of $G$ (almost) uniformly at random. 
\end{theorem}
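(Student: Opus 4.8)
The plan is to give an explicit ergodic Markov chain on the set of proper $k$-colorings of $G$, $k=(1+\eps)\Delta$, whose stationary distribution is uniform, to show it mixes within $\poly$ steps whenever $\pw(G)=O(\log n)$, and then to output the state reached after this many steps starting from an arbitrary proper coloring (one exists and is found greedily since $k>\Delta$). First I would fix a nice path decomposition $B_1,\dots,B_m$ of width $w=O(\log n)$, so each $B_i$ separates $B_1\cup\cdots\cup B_{i-1}$ from $B_{i+1}\cup\cdots\cup B_m$, and group the bags into short overlapping windows $W_1,W_2,\dots$ of a length parameter $\ell$, consecutive windows sharing their two interface bags, arranged so every vertex is interior to at least one window. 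Each window induces a subgraph of pathwidth $\le w$ whose boundary (vertices with a neighbour outside the window) lies in its interface bags and so has size $O(w)=O(\log n)$. The chain is the associated block (heat-bath) dynamics: from a coloring $\sigma$, pick a uniformly random window $W_j$ and redraw the colors of the non-boundary vertices of $W_j$ from the uniform distribution on proper colorings that agree with $\sigma$ outside $W_j$. That this is reversible with respect to the uniform measure, aperiodic, and irreducible (any two colorings are joined because $k\ge\Delta+2$) is routine.

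The step I expect to be the main obstacle is implementing a single move in genuine polynomial time. The naive route — dynamic programming along the window's sub-decomposition to count and then sample proper extensions of the fixed boundary — costs $k^{O(w)}\cdot\poly=n^{O(\log k)}$, which is only quasi-polynomial once $\Delta$, hence $k$, grows with $n$ (and a graph of constant pathwidth may have degree $\Theta(n)$, e.g. a star). To remove this I would bootstrap: resampling a uniform proper extension inside a window is itself an instance of the same problem on a smaller bounded-pathwidth graph with list constraints from the boundary, so I run the same block dynamics recursively; letting $\ell$ shrink geometrically makes the recursion depth $O(\log n)$ and every level cost $\poly$. The fussy part is keeping the $n^{-\Theta(1)}$ sampling errors of the inner calls from accumulating, handled by budgeting a polynomially small error per level. (An alternative is to obtain the partition-function ratios required for the heat-bath step from the chain itself via the standard reduction of counting to sampling; this swaps the recursion for an error-propagation analysis of comparable difficulty.)

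For the mixing time I would run path coupling on the Hamming metric. Take two colorings $\sigma,\tau$ differing only at a vertex $u$ and couple the chains to pick the same window $W_j$. If $u$ is interior to $W_j$ (probability $\Omega(\ell/m)$, since $u$ is interior to at least one of the $O(m/\ell)$ windows), the two resamplings see identical data and are coupled to coincide, erasing the discrepancy; if $u$ lies outside $W_j$, or inside it but not adjacent to any non-boundary vertex of $W_j$, the resampling is likewise identical and the discrepancy is untouched. New discrepancies arise only when $W_j$'s interior contains a neighbour of $u$, and there the two conditional Gibbs measures differ solely through the pinned color of $u$; this is precisely where $k\ge(1+\eps)\Delta$ is used, since then every vertex of the window always has at least $\eps\Delta$ available colors, so a greedy coupling of the two resamplings — carried out by an inner path coupling along the window's own decomposition, again exploiting the $\eps\Delta$ slack — keeps them equal except with probability $O(1/(\eps\Delta))$, and on failure spreads the discrepancy over $O(\ell w)$ vertices. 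Choosing $\ell$ (and, if needed, biasing the window distribution) so that the contracting term dominates gives expected Hamming distance contraction by a $1-\Omega(1/\poly)$ factor per step, hence mixing time polynomial in $n$.

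In short, the heart of the proof is twofold: the two nested coupling analyses of the mixing bound — in particular the claim that a one-vertex change on the boundary of a bounded-pathwidth graph perturbs the induced Gibbs measure by only $O(1/(\eps\Delta))$ under the relevant coupling, which is the place the hypothesis $k\ge(1+\eps)\Delta$ is really needed — and the bootstrapping of Step~2 that keeps each step polynomial rather than quasi-polynomial. The remaining ingredients (finding an initial coloring, computing the decomposition, and deducing the total-variation guarantee from the contraction) are standard.
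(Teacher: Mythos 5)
Your route is genuinely different from the paper's. The paper does not use block dynamics or coupling at all; it introduces a new single-site chain (``Single-Flaw dynamics'') whose state space is enlarged to include colorings with one flawed vertex, and bounds its mixing time by a multicommodity-flow/canonical-paths congestion argument in the style of Jerrum--Sinclair and Morris--Sinclair. The vertex-separation order coming from the path decomposition is used to build canonical paths in which a recoloring of a vertex $j$ is followed by a ``fixing'' subphase on the separator $S_j$ (of size $O(\log n)$), with the flow split evenly over available colors at each fix; since $|S_j|=O(\log n)$ the resulting blow-up in congestion is $\left((1+\eps)/\eps\right)^{O(\log n)}=\poly$. The paper explicitly motivates this design by observing that Glauber-type single-site moves do not admit obvious flows, whereas coupling-based analyses of Glauber/block dynamics have historically been stuck near $k\approx 1.76\Delta$ without girth and degree hypotheses.

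There is a real gap at the heart of your coupling argument, and it is exactly the place you identify as ``the heart of the proof.'' You need that when two colorings differ in the pinned color of a single boundary vertex $u$ of a window, the two heat-bath block resamplings can be coupled to agree except with probability $O(1/(\eps\Delta))$, spreading disagreement over $O(\ell w)$ vertices on failure. You propose to prove this by ``an inner path coupling along the window's own decomposition, again exploiting the $\eps\Delta$ slack.'' But single-site path coupling with only $\eps\Delta$ available colors (i.e.\ $k<2\Delta$) is precisely the regime where path coupling does \emph{not} contract in the worst case: a change at one site produces a new disagreement at a neighbor with probability roughly $1/(\eps\Delta)$ but that neighbor has up to $\Delta$ neighbors of its own, so the first moment of the Hamming distance need not shrink, and the whole literature on girth/degree conditions exists to patch exactly this. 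Asserting the $O(1/(\eps\Delta))$ coupling bound for arbitrary bounded-pathwidth windows with arbitrary pinned boundaries is essentially a strong-spatial-mixing claim for $k=(1+\eps)\Delta$, which is not known and appears to be as hard as the theorem you are trying to prove; making the inner coupling a path coupling is circular. (Separately, the recursive implementation of a heat-bath block move is more delicate than ``$O(\log n)$ levels each of cost $\poly$'': the cost is the \emph{product} of the per-level mixing times, so you need the level-$i$ mixing time to be polynomial in the level-$i$ window size with shrinkage fast enough that the product telescopes to $\poly$, and you must propagate total-variation error through the nesting. This is fixable, but it is not a minor remark.) The paper sidesteps both difficulties by abandoning coupling entirely: the multicommodity-flow bound needs no spatial-mixing lemma, and each step of the Single-Flaw chain is a trivial $O(\Delta)$-time single-vertex move.
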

Using the fact that the pathwidth of a graph is at most $O(\log{n})$ times its treewidth~\cite{Korach}, we have the following corollary.  
\begin{corollary}\label{cor:tr} (Informal)
	Let $\eps>0$ and $G$ be a graph with maximal degree $\Delta$ and  treewidth bounded by a constant. There exists a polynomial time algorithm for sampling a $((1+\eps)\Delta)$-proper coloring of $G$ (almost) uniformly at random. 
\end{corollary}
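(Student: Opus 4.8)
The corollary is immediate from Theorem~\ref{thm:main}: by~\cite{Korach} a graph of treewidth $t$ has pathwidth $O(t\log n)$, so a graph of constant treewidth has pathwidth $O(\log n)$ and the sampler of the theorem applies verbatim. So the plan is really a plan for the theorem. I would fix a nice path decomposition $B_1,\dots,B_m$ of $G$, with $|B_i|\le p+1 = O(\log n)$, $m = O(n)$, every edge inside some bag, and the bags containing any fixed vertex forming a contiguous interval. Note first that one cannot simply sample exactly via the standard tree-decomposition dynamic program: its tables have size $k^{\Theta(p)}$, which at $p=\Theta(\log n)$ and $k=\Theta(\Delta)=\Theta(n)$ (e.g.\ $G=K_{\Theta(\log n),n}$) is $n^{\Theta(\log n)}$ --- only quasipolynomial. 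This is exactly why a Markov chain is needed, and also why the chain cannot afford to resample large vertex sets exactly.

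The approach I would take is a \emph{block dynamics along the linear order}: one step picks a short window $W$ of consecutive bags (of constant, or only slowly growing, length, so that $W$ touches $O(\log n)$ vertices and carries only $O(p)$ boundary vertices into its two neighbouring bags) and replaces the current colours on the vertices private to $W$ by a fresh sample from the correct conditional law given the colours on $\partial W$; because $|W|$ is small this resample is cheap to carry out (by brute force, or recursively by the same chain on $G[W]$, whichever keeps it polynomial). The heart of the argument is rapid mixing of this chain at $k\ge(1+\eps)\Delta$, which I would establish by coupling. Ordinary single-site path coupling is hopeless here (it wants $k>2\Delta$); what replaces it is the \emph{linear} structure: each bag $B_i$ is a vertex separator of size $\le p+1$ splitting the decomposition into two parts that become conditionally independent once $B_i$ is fixed. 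So a disagreement produced by the coupling lives in a one-dimensional region and, as in a one-dimensional spin system (no phase transition), its influence should decay geometrically in the number of bags of separation --- the $\eps\Delta$ surplus colours inside each small window drive the per-step contraction, playing the role that $k>2\Delta$ plays for single sites. Combining geometric decay of influence along the line with a standard block-dynamics comparison (Martinelli / Dyer--Sinclair--Vigoda--Weitz style) then yields a polynomial bound on the number of block updates to mixing.

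The step I expect to be the real obstacle is making both halves work \emph{simultaneously}. First, obtaining genuine contraction at $k=(1+\eps)\Delta$ while the chain is only allowed to move a small block: since single-site coupling fails at this density, one likely needs either a carefully enlarged move (a flip/recolouring move in the spirit of Vigoda's chain, adapted to the window) or a multi-step, possibly non-Markovian, coupling, glued to the separator structure so that disagreements cannot leak past a bag. Second, controlling the dependence of the mixing-time bound on the pathwidth: the bound must come out as $2^{O(p)}\cdot\operatorname{poly}(n,k)$, i.e.\ single-exponential in $p$ only, because anything of the form $k^{\Omega(p)}$ or $n^{\Omega(p)}$ --- which is what naive window resampling or a loose union bound over boundary configurations would give --- is already superpolynomial at $p=\Theta(\log n)$. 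Threading the coupling/decomposition analysis so that the pathwidth enters only through a $2^{O(p)}$ factor, and never through $k$, is the delicate part; once it is done, rapid mixing together with the cheap per-step cost gives a polynomial-time (almost-)uniform sampler, proving Theorem~\ref{thm:main}, and the corollary follows as noted above.
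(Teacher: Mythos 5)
Your reduction of the corollary to the main theorem via Korach's bound $\pw(G) = O(\tw(G)\log n)$ is exactly the paper's argument, so that step is fine. The issue is your sketch of the main theorem, which takes a fundamentally different route and has a genuine gap at its center.

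You propose a block dynamics that resamples short windows of consecutive bags and want to bound its mixing by coupling, arguing geometric decay of disagreement along the one-dimensional path decomposition. You flag yourself that the hard part is getting contraction at $k=(1+\eps)\Delta$ --- single-site path coupling needs $k>2\Delta$ --- and you offer only placeholders (Vigoda-style flips adapted to the window, multi-step non-Markovian couplings). That is precisely the unresolved core: without a concrete coupling that contracts at this color density, the plan does not close, and no existing coupling machinery does this off the shelf for general bounded-pathwidth graphs. The paper avoids coupling entirely. It introduces the \emph{Single-Flaw dynamics}, whose state space additionally contains singly-flawed colorings (those with one vertex whose recoloring would make the coloring proper). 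Admitting these states makes canonical paths definable: process vertices in a linear order with optimal vertex separation, recolor vertex $j$ to $\cb(j)$ even if this creates a monochromatic star at $j$, then fix the clashing vertices in $S_j$ one at a time, splitting the flow evenly among the $\geq\eps\Delta$ available colors at each fix. The mixing bound then comes from the Sinclair / Diaconis--Stroock multicommodity-flow congestion machinery, with an injective-encoding argument showing the congestion on any transition is $\operatorname{poly}(n)\cdot\left((1+\eps)\eps^{-1}\right)^{O(\pw(G))}$; the vertex-separation order guarantees that the set of vertices whose color is undetermined by $(\ca,\cb,j,\ell)$ always lies inside $S_j$, of size at most $\pw(G)$, which is exactly where the $2^{O(\pw(G))}$ factor you correctly anticipated comes from. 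Finally, a short surjection argument bounds the ratio of flawed to proper colorings by $kn$, so rejecting and rerunning gives an almost-uniform sampler for the proper colorings. Your intuitions about the shape of the final bound and about why naive DP fails are right, but the mechanism is flow-based, not coupling-based, and the coupling you gesture at is precisely what the paper was designed to sidestep.
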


These results improve upon existing results for a large family of graphs. Previously, the best bounds for many graphs of bounded treewidth (with the notable exception of trees) were the results of Hayes and Vigoda~\cite{HayesV03} and  Dyer et al.~\cite{DyerFHV13}. 
We  remove all restrictions on minimal girth or maximal degree and show  fast mixing for sampling a  $(1+\eps)\Delta$ coloring on \emph{any} graph of bounded treewidth. We note that our result concerning pathwidth (Theorem~\ref{thm:main}) is  strictly stronger --  it holds for the entire family of graphs with pathwidth $O(\log{n})$, which includes many graphs with treewidth $\omega(1)$ -- but we highlight the second result because treewidth is a more popular measure: graphs of bounded treewidth have been studied extensively in the past few decades, e.g.,~\cite{Lok,Andrzejak,Loksh,Fomin,Arn,Bruno,KargerS01,Gupta} and are common~\cite{Thorup,Bod98}. Bounded treewidth graphs include  series-parallel graphs, outerplanar graphs,  many control flow graphs and expert systems, Apollonian networks and Halin graphs among others.

\begin{center}
	\renewcommand{\arraystretch}{1.1}
	
	\begin{table}
		\small
		\caption{Comparison of results on sampling $k$-colorings.} 
		
		\label{table1}
		\begin{tabular}{l l l l l   l l l }
			\hline
			Degree & Girth & Graph family		& $k>$    		   &Dynamics		& Mixing time   & Reference \\ \hline
			any	& any&	any   & $2\Delta$ &  Glauber & $O(n\log{n})$ & \cite{Jerrum95}\\
			any& any&	any  & $(1.833\ldots)\Delta$ &  Flip & $O(n\log{n})$ & \cite{Vigoda99}\\
			$\Omega{(\log{n})}$ & $\Omega{(\log{n})}$&		any & $(1.763\ldots)\Delta$ &Glauber & $O(n\log{n})$ & \cite{Dyer03}\\
			$ \Omega{(\log{n})}$		&$ \geq 9$& any & $(1
			+\eps)\Delta$ &Glauber & $O(n\log{n})$ & \cite{HayesV03}\\
			$\geq \Delta_0 \ ^\dagger$	&$\geq 5 $&	any
			& $(1.763\ldots)\Delta$ & Glauber & $O(n\log{n})$ &\cite{DyerFHV13}\\
			$\geq \Delta_0 \ ^\dagger$	&$\geq 6 $&	any& $(1.489\ldots)\Delta$ &Glauber &  $O(n\log{n})$& \cite{DyerFHV13}\\
			$O(1)$	&$\infty$  &  Trees & $4$  &Glauber & $\poly$ & \cite{Lucier11}\\
			$\geq \Delta_0 \ ^\dagger$ & any  &  Planar & $\Omega(\Delta/\log{\Delta})$  &Glauber & $O(n^3\log^9{n})$ & \cite{HayesVV15}\\
			\hdashline
			any&any&treewidth $=O(1)$  & $(1+
			\eps)\Delta$		 & Single-Flaw & $\poly$ & Here\\
			any 	&any&pathwidth $=O(\log{n})$  & $(1+
			\eps)\Delta$		 & Single-Flaw & $\poly$ & Here\\
			\hline
			
		\end{tabular}
		\\[10pt]
		\caption*{$^\dagger$ $\Delta_0$ is some absolute constant. $^\star$ $d$ is a constant.}
	\end{table}
\end{center}

\subsection{Techniques}
The two main methods of bounding the mixing time of random walks are \emph{coupling} and bounding the \emph{spectral gap} of the transition matrix~\cite{Guru16};
methods of bounding the conductance or congestion (which are used to bound the spectral gap) are generally considered to be stronger than coupling methods~\cite{Guru16,Kumar01}. Despite this, most of the work on sampling colorings uses various coupling methods~\cite{Jerrum95,Dyer10,Dyer03,Vigoda99,HayesV03,DyerFHV13,HayesVV15,Hayes07,HayesV05,Molloy04,Hayes03,DyerFFV06}; the Glauber dynamics do not lend themselves easily to the techniques that are usually used for bounding the spectral gap. In particular, bounding the  \emph{congestion}\ of the underlying graph of the transition matrix\footnote{In the underlying graph of the transition matrix $A$ of a Markov chain, the states are represented by vertices, there is an edge $(u,v)$ between two states $u,v \in \Omega$ with weight $w_e=A_{u,v}$ if $A_{u,v} \neq 0$.}  typically involves defining flows between states in the underlying graph. If one can describe a flow in the underlying graph such that the congestion of each edge is not too large, it implies that the spectral gap is large and the Markov chain mixes rapidly~\cite{Sinclair92,Diaconis91}. It is not clear how to construct such flows for the Glauber dynamics, as a transition involves changing a vertex's color to one of its available colors: if all of the neighbors of   a vertex $v$ that is colored $\blue$ are colored $\red$, how do we go about  changing $v$'s color to $\red$?    

In order to facilitate the use of these more powerful techniques, we introduce a new Markov chain, which we call \emph{
	Single-Flaw dynamics}. The difference between the Single-Flaw and  Glauber dynamics is that the Single-Flaw dynamics also allow colorings that have a ``single flaw'' --  there is at least one monochromatic edge, and all monochromatic edges share a vertex. In other words, the coloring is not proper, but there is a single vertex $v$ such that we can reach a proper coloring by changing $v$'s color only. We call such colorings  \emph{singly-flawed}. Concretely, the Single-Flaw dynamics Markov chain is the following: choose a vertex $v$ and a  color $c$ at random. If changing $v$'s color to $c$ results in a coloring that is either proper or singly-flawed, change $v$'s color to $c$. Otherwise do not.

The main advantage afforded by this Markov chain is that it   allows us to define simple \emph{canonical paths} between two states (colorings) $\ca$ and $\cb$:  select some order on the vertices, $v_1, \ldots, v_n$. Starting from $v_1$, for each vertex $v_i$, change its color to $\cb(v_i)$. If the transition leads to a proper coloring, continue to $v_{i+1}$. Otherwise, ``fix'' the monochromatic edges by recoloring the neighbors of $v_i$ that are also colored $\cb(v_i)$ (as $k \geq \Delta+2$ there is always at least one available color). When there are no  monochromatic edges remaining, continue to vertex $v_{i+1}$.  

We first describe a simple attempt to adapt the canonical paths argument  of Jerrum and Sinclair~\cite{JerrumS89,JerrumSV04} to our setting, using the canonical paths described above. Although it fails in all non-trivial cases, it is instructive as it exemplifies an important part of our method. For every edge in the underlying graph, (try to) describe an injective function from paths going through the edge to the state space of the chain. If we can describe such a function, it would mean that at most $|\Omega|$ paths use each edge. If there were no ``fixing phase'' (i.e.,  the graph was disconnected), this would be easy: Let $t$ be the transition from the state  $\sigma$ such that   the color of $j^{th}$ vertex is changed to $c$. The injective function would map the path from $\ca$ to $\cb$  to the following coloring $\sigma^*$: for vertices $v_i : i = 1,2,\ldots j$,  $\sigma^* = \ca(v_i)$, for all other vertices $v_i : i=j+1,\ldots, n$,  $\sigma^*(v_i)=\cb(v_i)$. The mapping is injective because (i) $\sigma^*$ a proper coloring and (ii) knowing $t$ and $\sigma^*$ allows us to recover $\ca$ and $\cb$, as  
\begin{align*}\sigma(v_i) =
\begin{cases}
\beta(v_i) & i= 1,2,\ldots, j\\
\alpha(v_i) & \text{ otherwise.}
\end{cases}
\end{align*}
This implies that at most $|\Omega|$ paths would  use each transition, if there were no fixing stage. We note that it is not necessary to show that \emph{at most} $|\Omega|$ paths use each transition to show polynomial time mixing; it suffices to show that $|\Omega|\cdot\poly$ paths use each transition~\cite{JerrumS89}.

In our case, however, there \emph{is} a fixing stage, and this complicates matters considerably. 
The following toy example serves to  highlight some of the challenges. Assume $G=(V,E)$ is a star with $|V|=n$ (it is actually easy to sample colorings of stars~\cite{Martinelli,Lucier11}; the reader may want to think of $G$ as a subgraph of some graph $G'$). Denote the center of the star by $v$, and assume that the canonical paths are such that $v$ is first and some vertex $w$ is last. If we need to fix the color of a leaf $u$ that is currently colored $\red$ (because $v$ was recolored $\red$), our fixing policy specifies that we color $u$ $\green$.\footnote{If the star is a subgraph, try to color it $\green$ first; if $\green$ is unavailable, color it using some other color.} Let $A = \{\ca_i\}$ be the set of colorings such that $\ca_i(v)=\blue$, $\ca_i(w)=\red$,   and  the vertices $u \notin \{v,w\}$ satisfy either $\ca_i(u) = \red$ or $\ca_i(u) = \green$. Let $\cb$ be such that $\cb(v)= \red$ and $\cb(u)=\green$ for all $u \neq v$.  Now consider $t$, the last transition in the path from any $\ca_i$ to $\cb$. Note that this transition is the same for all of these paths: $v$ is $\red$ and all of the leaves are $\green$, except for $w$, which is $\red$. The size of $|A|$ is $2^{n-2}$, hence at least $2^{n-2}$ paths, all of whose destination is $\cb$, use $t$; the canonical paths argument described above falls short.

To get around these difficulties, we use a multicommodity flow argument~\cite{Sinclair92,Diaconis91}. Instead of  specifying a single path for each pair of states, we describe a flow between them in the underlying graph. Whenever we fix a vertex's color, we split the flow evenly among all available options. This is similar to the argument of Morris and Sinclair~\cite{MorrisS04}, in which flow is also split up among different paths; it helps route the  flow more evenly, thereby avoiding the case where one edge is heavily congested while other ``available'' edges are not.
In contrast to~\cite{MorrisS04}, we only split the flow  in the fixing stages; in fact, whenever a vertex $v_i$ is colored  $\cb(v_i)$, this consolidates the flow!  Our technique can be thought of as a hybrid argument between the canonical paths proof technique of~\cite{JerrumS89} and the multicommodity flow argument of~\cite{MorrisS04}.

Unfortunately, splitting  the flow is still not enough to guarantee a sufficiently low congestion: if many vertices are being ``fixed'' at the same time, there are not necessarily enough available colors  to spread the flow thinly enough, as the number of available colors for each vertex is possibly only  $\eps\Delta$, while it could have potentially been colored with any one of the $(1+\eps)\Delta$ colors. When the pathwidth of the graph is bounded by $O(\log{n})$, there is an ordering that guarantees good \emph{vertex separation}~\cite{Kinner}. We define this formally in Section~\ref{sec:treewidth}, but in essence it implies that there is some order on the vertices, such that if our canonical paths obey this order, not too many vertices will need to be fixed at any time. We note that finding such an order is $NP$-hard~\cite{NP}, however we only need that such an order exists for the canonical paths argument. 

Finally, recall that the state space of the Single-Flaw dynamics includes flawed colorings. We show that there are not too many singly-flawed colorings relative to proper colorings, hence executing the chain polynomially many times will guarantee that we output a proper coloring w.h.p.

\subsection{Related work}
\label{sec:related}

Most of the work on sampling colorings has focused on the Glauber dynamics (e.g.,~\cite{Jerrum95,Dyer10,DyerFHV13,HayesVV15,Goldberg10,Tetali10,Lu17,Mossel10,Lucier11,Molloy04}). Other Markov chains have been analyzed, notably the Flip dynamics of Vigoda~\cite{Vigoda99}, which is closely related to the chain proposed by Wang,  Swendsen,  and Koteck\'y~\cite{Pot}: when a vertex $v$ is required to change color from $c$ to $c'$, the colors of the entire neighborhood that is colored with $c$ and $c'$ are flipped (the chain of Vigoda only performs some flips with some probability). It is also possible to sample colorings using approaches that do not use MCMC methods; for example, Efthymiou~\cite{Ef2} proposed a combinatorial method for sampling colors that does not use a Markov chain, and used it to show that it is possible to sample colorings  on $G(n,d/n)$ using $k>(1+\eps)d$ colors. A caveat is that the run time is only polynomial w.p. $1-2n^{-2/3}$. 
The main difference between Single-Flaw dynamics and other work on sampling colors is that we allow flawed vertices (specifically, one at a time). There are other Markov chains that also consist of ``flawed'' states that are not part of the space we wish to sample from. An example is the Markov chain for sampling perfect matchings in bipartite graphs, proposed by Broder~\cite{Broder} and analyzed by Jerrum and Sinclair~\cite{JerrumS89} and Jerrum, Sinclair and Vigoda~\cite{JerrumSV04}: the chain consists of perfect matchings (of size $n$), and imperfect matchings of size $n-1$.  An interesting distinction is that Broder's chain needs the imperfect matchings to transition between perfect matchings (otherwise, it is unclear how to transition). We do not need the imperfect states to show convergence: the Glauber dynamics are  known to converge to the uniform distribution for $k \geq \Delta+2$; we only use the imperfect colorings to bound the mixing time.

The method of bounding the conductance or congestion of the transition matrix of a Markov chain has been used to great success in sampling and counting of various problems e.g.,~\cite{JerrumSV04,GuoJ17,MorrisS04}. 
To our knowledge, the only place that these types of arguments have been successfully applied to sampling colorings is in bounding the mixing time of the Glauber dynamics on trees with bounded degree~\cite{Lucier11}. The arguments of~\cite{Lucier11} rely heavily on both the tree structure and the upper bound on the degree and it seems very difficult, if at all possible, to extend their techniques to more general settings such as ours.

Sampling colorings corresponds to sampling
configurations of the zero temperature $k$-state anti-ferromagnetic Potts model~\cite{Potts}. One can draw an analogy between our technique and temperature-tuned walks that also include higher energy levels (see e.g.,~\cite{Liu}),  though
instead of  walking at a fixed temperature, which would allow some Poisson-like distribution of the number of flaws, we allow exactly one flaw, and correct it before allowing the next flaw.

The terms \emph{treewidth} and \emph{pathwidth} were introduced by Robertson and Seymour~\cite{Tree,Path}; the concept of treewidth was discovered independently several times, and was originally introduced under a different name by Bertel\`{e} and Brioschi~\cite{Bertele}.
There is a large literature of graphs studying graphs bounded treewidth, e.g.,~\cite{Andrzejak,Gupta,Loksh,Fomin,Arn,Bruno}. Of particular interest is a work by Chekuri, Khanna and Shepherd~\cite{Chekuri} that also considers multicommodity flows on graphs of bounded treewidth.  Their techniques and results are incomparable to ours; they study the flow on graphs, while we use a flow to bound the congestion on the underlying graph of the Single-Flaw dynamics.

\section{Preliminaries}
\label{sec:prelims}
We denote the set $\{1,2,\ldots, m\}$ by $[m]$. 
Let $G=(V,E)$ be a graph, and denote $|V|=n$.
We assume that the vertices of $G$ are uniquely identified by $\{1, 2,\ldots, n\}$. For any (not necessarily simple) path $p$ in $G$, let $|p|$ denote the length of $p$ (i.e., the number of edges in $p$, where if an edge appears $k$ times in $p$, it is counted $k$ times). 

\subsection{Colorings}

For any $k$-coloring of $G$, $\sigma:V\rightarrow[k]$, let $\mono=\{(u,v):\sigma(u)=\sigma(v)\}$ denote the set of monochromatic edges. $\sigma$ is a proper coloring if $\mono=\emptyset$. $\sigma$ is a \emph{singly-flawed coloring} if $\mono\neq \emptyset$  and there is a vertex that is common to all edges in $\mono$, i.e., $\exists v: \forall e\in \mono, v \in e$. We say that such a vertex $v$ is  a \emph{flawed vertex} of $\sigma$. 
Note that a singly-flawed coloring has exactly two flawed vertices if $|\mono|=1$ and one flawed vertex otherwise. We denote the set of proper $k$-colorings of $G$ colors by $\C_p(G,k)$ and the set of all singly-flawed colorings by $\C_{sf}(G,k)$. We drop  $G$ and $k$ when they are clear from context. Let $\sigma$ be a coloring. If, after recoloring some $v\in V$ with a color $c$, there is no monochromatic edge $(u,v)$,  we say that  $c$ is \emph{available} to $v$ in $\sigma$. Note that a color's availability does not depend on whether $\sigma$ or the coloring obtained by recoloring $v$ with $c$ is proper, singly-flawed or otherwise.

We first show two results that will be useful later on, regarding proper and singly-flawed colorings: (1) the ratio of  singly-flawed colorings to proper colorings is ``not too large'' (Corollary~\ref{cor:same}), and (2)  there is a mapping from singly-flawed colorings to proper colorings, such that ``not too many'' singly-flawed colorings are mapped to any proper coloring (Corollary~\ref{cor:sur}). Both results are corollaries of the following simple lemma.

\begin{lemma}\label{lem:same}
	For any $G=(V,E)$ such that $|V|=n$ and $k \geq \Delta+2$, there exists a surjective function 
	$$g:\C_p(G,k) \times [k]\times[n] \rightarrow \C_{sf}(G,k).$$
\end{lemma}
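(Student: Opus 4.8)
The plan is to construct $g$ explicitly and then argue surjectivity by exhibiting, for each singly-flawed coloring, a preimage. Given a proper coloring $\sigma \in \C_p(G,k)$, a color $c \in [k]$, and a vertex index $j \in [n]$, I would define $g(\sigma, c, j)$ by recoloring vertex $v_j$ with color $c$ in $\sigma$. Of course this need not produce a singly-flawed coloring: it might still be proper (if $c$ is available to $v_j$ in $\sigma$), or it might even leave $\sigma$ unchanged (if $c = \sigma(v_j)$). So the first step is to specify a default behavior in these degenerate cases — for instance, if recoloring $v_j$ with $c$ does not yield a singly-flawed coloring, set $g(\sigma, c, j)$ to be some fixed canonical element of $\C_{sf}(G,k)$ (we should first check $\C_{sf}$ is nonempty, which holds as soon as $G$ has an edge; if $G$ has no edges the statement is vacuous or trivial). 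The point of the extra $[k] \times [n]$ factors in the domain is precisely to give enough ``room'' to hit every singly-flawed coloring while absorbing these bad inputs harmlessly.

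The core step is surjectivity. Fix an arbitrary singly-flawed coloring $\tau \in \C_{sf}(G,k)$ and let $v$ be a flawed vertex of $\tau$ (choose the one with the smaller index if $|\mathcal{M}\mathcal{C}(G)| = 1$, so the choice is canonical). All monochromatic edges of $\tau$ are incident to $v$, so recoloring $v$ alone can make $\tau$ proper. The number of forbidden colors for $v$ once we ignore $v$'s own contribution is at most $\deg(v) \le \Delta$, so there are at least $k - \Delta \ge 2$ colors available to $v$; pick one such color $c'$, say the smallest available one, and let $\sigma$ be the coloring obtained from $\tau$ by recoloring $v$ with $c'$. Then $\sigma$ is proper. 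Now I claim $g(\sigma, \tau(v), j) = \tau$, where $v = v_j$: recoloring $v_j$ in $\sigma$ back to its original color $\tau(v)$ reproduces $\tau$ exactly, and $\tau$ is singly-flawed, so we are not in the degenerate branch. This shows $\tau$ is in the image, and since $\tau$ was arbitrary, $g$ is surjective.

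The main obstacle — really the only subtlety — is pinning down the degenerate cases cleanly so that $g$ is a well-defined total function on all of $\C_p(G,k) \times [k] \times [n]$ while still being surjective onto $\C_{sf}(G,k)$; the surjectivity witness itself is straightforward once the flawed vertex and an available recoloring color are chosen canonically. One should also double-check the edge case $\C_{sf}(G,k) = \emptyset$ (no edges in $G$), where any function trivially counts as surjective onto the empty set, and note that $\C_p(G,k) \neq \emptyset$ since $k \ge \Delta+2$ guarantees a proper coloring exists (greedily). I would also remark that the factor $[k] \times [n]$ is generous and the argument in fact shows each $\tau$ has a preimage of a controlled form, which is exactly what is needed to later extract the quantitative Corollaries~\ref{cor:same} and~\ref{cor:sur} bounding $|\C_{sf}|/|\C_p|$ and the fiber sizes.
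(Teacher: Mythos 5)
Your proposal is correct and follows essentially the same approach as the paper: define $g$ by recoloring one vertex, map degenerate inputs to an arbitrary element of $\C_{sf}$, and prove surjectivity by noting that any singly-flawed coloring can be turned into a proper one by recoloring its flawed vertex with an available color (which exists since $k \geq \Delta + 2$). You supply somewhat more care on edge cases (empty $\C_{sf}$, existence of a proper coloring) than the paper's terse version, but the core construction and witness are identical.
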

\begin{proof}
	For every coloring $\sigma \in \C_p(G,k)$, every vertex $v \in V$ and every color $c \in [k]$, let \begin{align*}
	\sigma'_{c,v} = 
	\begin{cases}
	\sigma(u) &\text{if }u \neq v\\
	c& \text{if }u = v
	\end{cases}
	\end{align*}
	If $\sigma'_{c,v}  \in \C_{sf}(G,k)$, let $g(\sigma,c,v) = \sigma'_{c,v}$, otherwise let $g(\sigma,c,v)$ be some arbitrary coloring in $\C_{sf}(G,k)$.  
	It is easy to see that every $\sigma' \in \C_{sf}(G,k)$ is in the range of $g$: the reverse operation of changing the color of a flawed vertex $v$ in $\sigma'$ to some available color  gives a proper coloring.
\end{proof}
The two corollaries that we require are the following.
\begin{corollary}\label{cor:same}
	For any $G=(V,E)$ such that $|V|=n$ and $k \geq \Delta+2$,
	$$|\C_{sf}(G,k)| \leq kn|\C_p(G,k)|.$$
\end{corollary}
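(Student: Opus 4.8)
The plan is to simply invoke Lemma~\ref{lem:same} and appeal to the elementary counting principle that the existence of a surjection $f:A\to B$ between finite sets forces $|B|\le|A|$. First I would instantiate the lemma (valid since $k\geq\Delta+2$) to obtain the surjective map
$g:\C_p(G,k)\times[k]\times[n]\to\C_{sf}(G,k)$.
Because $g$ is onto, every singly-flawed coloring admits at least one preimage; picking one preimage for each element of $\C_{sf}(G,k)$ yields an injection $\C_{sf}(G,k)\hookrightarrow\C_p(G,k)\times[k]\times[n]$, and therefore $|\C_{sf}(G,k)|\le|\C_p(G,k)\times[k]\times[n]|=kn\,|\C_p(G,k)|$, which is exactly the claimed bound.

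There is essentially no obstacle here: all of the combinatorial content — in particular verifying that the ``un-flaw'' operation (recolour the flawed vertex of $\sigma'$ with an available colour, which exists since $k\geq\Delta+2$) produces an element of $\C_p(G,k)$ and hence witnesses surjectivity — is already established inside Lemma~\ref{lem:same}. The one point worth a sentence is that the cardinality comparison presupposes we are comparing finite sets with a genuine surjection; this is fine because a greedy argument shows $\C_p(G,k)\neq\emptyset$ whenever $k\geq\Delta+1$, so the domain of $g$ is a nonempty finite set. (In any case the inequality is trivially true if $\C_p(G,k)$ were empty, since then $\C_{sf}(G,k)$ would be empty as well by the surjectivity of $g$.) Thus the proof is a one-line consequence of the lemma.
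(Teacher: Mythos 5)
Your proof is correct and is essentially the paper's proof: the paper also derives the bound immediately from the surjectivity of $g$ in Lemma~\ref{lem:same}, just stated more tersely. Your extra remarks (finiteness, nonemptiness of $\C_p(G,k)$) are harmless elaborations of the same one-line argument.
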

\begin{proof}Immediate from the surjectivity of the function $g$ in Lemma~\ref{lem:same}.
\end{proof}
\begin{corollary}\label{cor:sur}
	For any $G=(V,E)$ such that $|V|=n$ and $k \geq \Delta+2$, there exists a function 
	$$g':\C_{sf}(G,k) \rightarrow \C_{p}(G,k),$$
	for which each element in the co-domain has at most $kn$ pre-images in the domain.
\end{corollary}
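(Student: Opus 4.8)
The plan is to derive Corollary~\ref{cor:sur} directly from Lemma~\ref{lem:same} by inverting the surjection $g$. Since $g:\C_p(G,k)\times[k]\times[n]\to\C_{sf}(G,k)$ is surjective, every singly-flawed coloring $\sigma'$ has at least one preimage $(\sigma,c,v)$ under $g$; I would fix, for each $\sigma'\in\C_{sf}(G,k)$, a canonical such preimage (say the lexicographically least triple mapping to it) and define $g'(\sigma')$ to be the first coordinate $\sigma$ of that triple. This immediately gives a well-defined function $g':\C_{sf}(G,k)\to\C_p(G,k)$.

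Next I would bound the number of preimages of a fixed proper coloring $\sigma$ under $g'$. If $g'(\sigma')=\sigma$, then by construction there is a pair $(c,v)\in[k]\times[n]$ with $g(\sigma,c,v)=\sigma'$, so $\sigma'$ lies in the image of the map $(c,v)\mapsto g(\sigma,c,v)$. That map has domain of size $kn$, hence its image has size at most $kn$; therefore at most $kn$ distinct singly-flawed colorings $\sigma'$ can satisfy $g'(\sigma')=\sigma$. This is exactly the claim that each element of the co-domain has at most $kn$ preimages.

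I do not anticipate a genuine obstacle here — the statement is a routine counting consequence of surjectivity, essentially a ``pick an inverse branch'' argument. The only point requiring a little care is the bookkeeping: $g$ as defined in Lemma~\ref{lem:same} sends some triples to an \emph{arbitrary} singly-flawed coloring when $\sigma'_{c,v}\notin\C_{sf}$, so a preimage of $\sigma'$ under $g$ need not have $\sigma$ ``close'' to $\sigma'$; but this does not matter, since all we use is that the fiber of $g'$ over $\sigma$ injects into the set $\{g(\sigma,c,v):(c,v)\in[k]\times[n]\}$, whose cardinality is trivially at most $kn$. An alternative, slightly cleaner route would be to restrict attention to the ``natural'' preimages (those with $\sigma'_{c,v}=\sigma'$), observing that every $\sigma'\in\C_{sf}$ has at least one such preimage by the reverse-operation argument in the proof of Lemma~\ref{lem:same}; then $g'(\sigma')$ is obtained from $\sigma'$ by recoloring a flawed vertex with an available color, and the $kn$ bound follows identically by counting choices of (flawed vertex, new color). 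Either way the proof is a couple of lines.
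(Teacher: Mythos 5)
Your proof is correct and takes essentially the same approach as the paper's (which simply picks an arbitrary $g$-preimage $(\sigma,v,c)$ of each $\sigma'$ and sets $g'(\sigma')=\sigma$, leaving the $kn$ bound implicit). Your write-up just spells out the injectivity of the map from the fiber $\{\sigma':g'(\sigma')=\sigma\}$ into $[k]\times[n]$, which is the intended justification.
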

\begin{proof}
	For every $\sigma' \in \C_{sf}$,  arbitrarily select  one pre-image $(\sigma,v,c)$ w.r.t. $g$, and set $\sigma$ as the image for $\sigma'$ under $g'$.
\end{proof}

\subsection{Markov chains and rapid mixing}
\label{sec:markovprelims}

In this section we review some of the results on the mixing time of Markov chains that we will require. The reader is referred to~\cite{LevinPW06} for an excellent introduction to Markov chains and modern techniques on bounding their mixing time. 

Consider  a discrete-time Markov chain $\mc$ with finite state space $\Omega$ and symmetric transition probability matrix $P$ (i.e., $P(\sigma, \sigma') = P(\sigma', \sigma)$ for all $\sigma, \sigma' \in \Omega$). The chain is said to be \emph{irreducible} if for every pair of states $\sigma, \sigma' \in \Omega$, there exists some $t$ such that $P^t(\sigma, \sigma')>0$; in other words, it is possible to get from any state to any state using a finite number of transitions. It is \emph{aperiodic} if for any $\sigma,  \in \Omega$,  $\text{gcd}\{t:P^t(\sigma, \sigma)>0\}=1$. It is \emph{lazy} if for all $\sigma \in \Omega, P(\sigma, \sigma)>1/2$. A fundamental theorem of stochastic processes states that an irreducible and aperiodic Markov chain converges to a unique \emph{stationary distribution} $\pi$ over $\Omega$, i.e., $\lim_{t\to\infty}P^t(\sigma, \sigma')= \pi(\sigma')$ for all $\sigma, \sigma' \in \Omega$.  
If in addition $P$ is symmetric,  then $\pi$ is uniform over  $\Omega$ (e.g.,~\cite{Aldous1983}).

Our goal is to describe  a \emph{fully-polynomial almost uniform sampler for proper colorings}; namely, 
a randomized algorithm that, given as inputs a graph $G=(V,E)$ and  a bias
parameter $\delta$, outputs a random proper coloring of $G$ from a distribution $D$ that satisfies
$d_{TV}(D, U) \leq \delta$, where $U$ is the uniform distribution on the proper colorings of $G$ and $d_{TV}$ is the total variation distance, defined as follows:\footnote{Alternatively, we can define it as $d_{TV}(\mu, \nu)=\frac{1}{2}\sum_{\sigma \in \Omega} |\mu(\sigma)-\nu(\sigma)|$. It is easy to verify that the two definitions are equivalent.} For any two distributions $\mu, \nu$ on $\Omega$,

\begin{equation}\label{def:dtv}
d_{TV}(\mu, \nu) = \max_{S \subseteq \Omega}|\mu(S)-\nu(S)|.
\end{equation}

We are interested in the rate at which a  Markov chain converges to its stationary distribution $\pi$.  We define the \emph{mixing time} from a state $\sigma$ to be 

\begin{equation}
\tau_\sigma(\delta)=  \min\{\bar{t}: d_{TV}( P^t(\sigma, \cdot), \pi)\leq\delta\ \text{ for all } t \geq \bar{t}\},
\end{equation}

We further define the mixing time of the Markov chain to be  $\tau(\delta)=\displaystyle\max_{\sigma}\tau_\sigma(\delta)$. We say that a Markov chain is \emph{rapidly mixing} if $\tau(1/2e)$ is polynomial in $n$. The constant $1/2e$ is arbitrary, as a bound on $\tau(1/2e)$ implies a bound on $\tau(\delta)$ for any $\delta>0$ (e.g.,~\cite{Aldous1983}):
$$\tau(\delta)\leq (1-\log{\delta})\cdot \tau(1/2e).$$

In order to bound the mixing time, we describe a multicommodity flow on the underlying graph $H=(\Omega, F)$ of the Markov chain, where $F = \{ (\sigma,\sigma'), P(\sigma, \sigma')>0 \}$ is the set of all transitions that have positive probability.

We denote by 
\begin{equation}\label{eq:q} q(\sigma,\sigma') = \pi(\sigma)P(\sigma, \sigma'),\end{equation} 
the \emph{ergodic flow} through the edge $(\sigsig)$ of $H$ (an intuitive way to think about $q(\sigma,\sigma')$ is the probability of traversing edge $(\sigma, \sigma')$ at stationarity.) 

For all ordered pairs $(\cc)\in \Omega^2$, let $\paths_{\cc}$ denote a set of (not necessarily simple) directed paths from $\ca$ to $\cb$ in $H$. 
A \emph{flow} is a function $f:\paths \rightarrow \R^+\cup \{0\}$ where $\paths = \bigcup_{\cc}\paths_{\cc}$ 
that satisfies 
\begin{equation}\label{eq:flow}
f_{\cc} \equiv \sum_{p \in \paths_{\cc}} f(p) = \pi(\ca)\pi(\cb),
\end{equation} for   every $\cc \in \Omega$. 

We define the congestion on an edge $(\sigsig)$ with respect to a flow $f$ by 
\begin{equation}\label{eq:congestion}
\rho_f(\sigsig) = \frac{1}{q(\sigsig)} \sum_{\cc \in \Omega} \sum_{p:(\sigsig) \in p \in \paths_{\cc}} f(p)|p|,	\end{equation}

and the congestion of $f$ by 
$$\rho_f = \max_{(\sigsig) \in F}\rho_f(\sigsig).$$

We use the following theorem, due to Sinclair~\cite{Sinclair92} and  Diaconis and Stroock~\cite{Diaconis91}, that relates the mixing time to the congestion of a flow. Note that it holds for any flow; in order to bound the mixing time, we need to find \emph{some} flow that has low congestion.
\begin{theorem}\cite{Sinclair92}\label{thm:sinclair}
	For any irreducible, aperiodic, lazy and symmetric Markov chain $\mc$ with transition matrix $P$ on state space $\Omega$,  any flow $f$ on the underlying graph of $\mc$, and any state $\sigma_0 \in \Omega$,
	\begin{equation*}
	\tau_{\sigma_0}(\delta) \leq \rho_f  \left( \ln \pi(\sigma_0)^{-1} + \ln \delta^{-1}\right). 
	\end{equation*}
\end{theorem}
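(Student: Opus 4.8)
The plan is to route everything through the spectral gap of $P$. Since $P$ is symmetric (equivalently, reversible with respect to the uniform stationary distribution $\pi$), it is self-adjoint with respect to the inner product $\langle \varphi,\psi\rangle_\pi=\sum_{\sigma}\varphi(\sigma)\psi(\sigma)\pi(\sigma)$, hence has real eigenvalues $1=\lambda_0>\lambda_1\ge\cdots\ge\lambda_{|\Omega|-1}$, with $\lambda_0>\lambda_1$ by irreducibility and aperiodicity; laziness, $P(\sigma,\sigma)>1/2$, forces every eigenvalue to be nonnegative, so the absolute spectral gap equals $1-\lambda_1$. The argument then splits into two standard pieces: (i) $\tau_{\sigma_0}(\delta)\le \frac{1}{1-\lambda_1}\bigl(\ln\pi(\sigma_0)^{-1}+\ln\delta^{-1}\bigr)$, and (ii) $1-\lambda_1\ge 1/\rho_f$. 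Composing the two gives the theorem.

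For (ii) --- the multicommodity-flow (Poincar\'e) bound --- I would use the variational characterization
\[
1-\lambda_1=\min_{\varphi\text{ non-constant}}\frac{\mathcal{E}(\varphi,\varphi)}{\operatorname{Var}_\pi(\varphi)},\qquad
\mathcal{E}(\varphi,\varphi)=\tfrac12\sum_{\sigsig}q(\sigsig)\,(\varphi(\sigma)-\varphi(\sigma'))^2,
\]
together with the identity $\operatorname{Var}_\pi(\varphi)=\tfrac12\sum_{\cc\in\Omega^2}\pi(\ca)\pi(\cb)\,(\varphi(\ca)-\varphi(\cb))^2$. Fix any $\varphi$. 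For each ordered pair $(\cc)$ and each path $p\in\paths_{\cc}$, telescope $\varphi(\ca)-\varphi(\cb)$ along the edges of $p$ and apply Cauchy--Schwarz to get $(\varphi(\ca)-\varphi(\cb))^2\le |p|\sum_{(\sigsig)\in p}(\varphi(\sigma)-\varphi(\sigma'))^2$. Since $\sum_{p\in\paths_{\cc}}f(p)=\pi(\ca)\pi(\cb)$ by \eqref{eq:flow}, substituting this in and exchanging the order of summation collects, for each transition $(\sigsig)$, the weight $\sum_{\cc}\sum_{p:(\sigsig)\in p}f(p)|p|=\rho_f(\sigsig)\,q(\sigsig)\le \rho_f\,q(\sigsig)$ by the definition \eqref{eq:congestion} of the congestion. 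This yields $\operatorname{Var}_\pi(\varphi)\le \rho_f\,\mathcal{E}(\varphi,\varphi)$ for every $\varphi$, which is exactly (ii).

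For (i), I would invoke the standard $L^2$-decay estimate for reversible chains: expanding $P^t(\sigma_0,\cdot)/\pi(\cdot)-1$ in an orthonormal (with respect to $\langle\cdot,\cdot\rangle_\pi$) eigenbasis of $P$ and using $\sum_i f_i(\sigma_0)^2=\pi(\sigma_0)^{-1}$ gives, via Cauchy--Schwarz, $\|P^t(\sigma_0,\cdot)-\pi\|_{TV}\le \tfrac12\,\pi(\sigma_0)^{-1/2}\lambda_1^{\,t}$ (laziness is used again here so the negative part of the spectrum does not dominate). Then $\lambda_1^{\,t}\le e^{-(1-\lambda_1)t}$, and solving for the least $t$ making the right-hand side at most $\delta$ gives $\tau_{\sigma_0}(\delta)\le \frac{1}{1-\lambda_1}\bigl(\tfrac12\ln\pi(\sigma_0)^{-1}+\ln(2\delta)^{-1}\bigr)$, which, since $\pi(\sigma_0)\le1$ and $\ln(2\delta)^{-1}<\ln\delta^{-1}$, is dominated by the (slightly weaker) bound in the statement; combining with (ii) finishes the proof.

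I expect the only delicate point to be the bookkeeping in step (ii): making all the normalizations agree --- ordered versus unordered pairs, the factors of $\tfrac12$ in both $\mathcal{E}$ and $\operatorname{Var}_\pi$, the convention $f_{\cc}=\pi(\ca)\pi(\cb)$ in \eqref{eq:flow}, and the $1/q(\sigsig)$ and path-length factor $|p|$ appearing in \eqref{eq:congestion} --- so that the telescoping-plus-Cauchy--Schwarz computation lands exactly on $\operatorname{Var}_\pi(\varphi)\le \rho_f\,\mathcal{E}(\varphi,\varphi)$ with no stray constant. Everything else (the spectral theorem for self-adjoint $P$, nonnegativity of the spectrum from laziness, and the $L^2\!\to\!$ TV estimate) is routine, and one should double-check the trivial case $\operatorname{Var}_\pi(\varphi)=0$ separately.
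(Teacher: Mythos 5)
The paper does not prove Theorem~\ref{thm:sinclair}; it is imported by citation from Sinclair~\cite{Sinclair92} and Diaconis--Stroock~\cite{Diaconis91}, so there is no in-paper proof to compare against. Your reconstruction of the standard argument is correct: the decomposition into a Poincar\'e inequality $1-\lambda_1 \ge 1/\rho_f$ (via telescoping along flow paths plus Cauchy--Schwarz, matched to the $|p|$ factor baked into the congestion \eqref{eq:congestion}) and an $L^2$-to-TV decay bound $\|P^t(\sigma_0,\cdot)-\pi\|_{TV}\le \tfrac12\pi(\sigma_0)^{-1/2}\lambda_1^t$ is exactly the structure of Sinclair's proof, and your bookkeeping is right --- the $\tfrac12$ in $\operatorname{Var}_\pi$ cancels against the factor $2$ from summing over ordered pairs when you compare to $\mathcal{E}(\varphi,\varphi)=\tfrac12\sum q(\varphi-\varphi')^2$, and laziness kills the negative part of the spectrum so the absolute gap is $1-\lambda_1$. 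The only cosmetic point: you remark that ``laziness is used again'' to control the negative spectrum in step (i), but with $P(\sigma,\sigma)>1/2$ you have already established in your setup that all eigenvalues are nonnegative, so this is the same fact reused rather than an additional appeal; and your final bound $\frac{1}{1-\lambda_1}\bigl(\tfrac12\ln\pi(\sigma_0)^{-1}+\ln(2\delta)^{-1}\bigr)$ is indeed dominated by the stated $\rho_f(\ln\pi(\sigma_0)^{-1}+\ln\delta^{-1})$ after applying (ii), as you note.
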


\subsection{Treewidth, pathwidth and vertex separation}
\label{sec:treewidth}

A \emph{tree decomposition} of a graph $G=(V,E)$ is a tree $T$ with $m$ nodes, where each node of $T$ represents a subset of $V$: $X_i \subseteq V, i \in [m]$ such that the following hold:
\begin{enumerate}
	\item $\bigcup_{i=1}^m X_i = V$,
	\item For every $(u,v) \in E$,  $u, v \in X_i$ for some $i \in [m]$.
	\item For all $i,j,k \in [m]$, if $X_j$ is on the (unique) path between $X_i$ and $X_k$, then $X_i \cap X_k \subseteq X_j$.
\end{enumerate}
The first requirement guarantees that every vertex of $G$ is in at least one node of $T$, the second that every two neighboring vertices in $G$ share at least one node in $T$, and the third that if some vertex $v \in V$ is in both $X_i$ and $X_k$, it is in every node of the path between $X_i$ and $X_k$ in $T$.
The width of $T$ is defined as $\max_{i \in [m]}\{|X_i|-1\}$.
The treewidth of $G$, denoted $\tw(G)$, is the minimal $\omega$ such that there exists some tree-decomposition of $G$ with width $\omega$.
The \emph{pathwidth} of $G$ (denoted $\pw(G)$),
is defined analogously to treewidth, with $T$ constrained to be a path.

A linear ordering  of a graph $G=(V,E)$ is a bijective mapping of  vertices to integers; $\lgg:V \rightarrow \{1, 2, \ldots, n\}$. Given a graph $G$, a linear ordering $\lgg$, and an integer $j \in \{1,\ldots, n\}$, let $A_j$ be the set of vertices mapped to the integers $1, \ldots, j$ by $\lgg$; i.e., $A_j = \{v:\lgg(v) \leq j\}$. Let $B_j$ denote the set of vertices that are mapped to integers greater than $j$ by $\lgg$: $B_j = V \setminus A_j$. A \emph{minimal vertex separator} for an index $j \in \{1,\ldots n\}$ (denoted $\mvs(G,\lgg,j)$) is a minimal set of vertices $S_j \subset V$ such that the following hold:\footnote{Traditionally, a vertex separator for  $j$ is defined such that the separating subset appears before $j$ in the ordering~\cite{Ellis94}. This is identical to our definition with the order inverted.}
\begin{enumerate}
	\item For all $u \in S_j$, $\lgg(u)>j$. \item $A_j$ and $B_j \setminus S_j$ are disconnected; that is, there is no edge $(u,v) \in E$ such that $u \in A_j, v \in B_j \setminus S_j$. 
\end{enumerate}
The \emph{vertex separation number} of a graph $G$ and a linear order $\lgg$, denoted $\vsn(G,\lgg)$ is the size of the largest minimal vertex separator. That is, $$\vsn(G,\lgg) = \max_{j \in \{1,\ldots,n\}} \{\mvs(G,\lgg,j)\}.$$ 
The \emph{vertex separator number} for a graph $G$, denoted $\vsn(G)$ is the minimal vertex separation number $\vsn(G,\lgg)$ over all possible linear orderings $\lgg$ of $G$. 
$$\vsn(G) = \min_{\lgg}\vsn(G,\lgg).$$
We call an order $\lgg$ for which $\vsn(G)=\vsn(G,\lgg)$ a \emph{minimal order} for $G$.

We require the following two theorems, relating the treewidth, pathwidth and vertex separation number of a graph.
\begin{theorem} \cite{Kinner} \label{Kinner}For any graph $G$, 
	$\vsn(G) = \pw(G)$.
\end{theorem}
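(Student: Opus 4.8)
The plan is to establish the two inequalities $\pw(G)\le\vsn(G)$ and $\vsn(G)\le\pw(G)$ separately; this classical equality~(\cite{Kinner}) admits a fairly standard two-sided argument. First I would record a preliminary observation: for a linear ordering $\lgg$ and an index $j$, the set $\mvs(G,\lgg,j)$ is in fact uniquely determined and equals the ``right boundary'' $R_j:=\{v\in B_j:\ v\text{ has a neighbor in }A_j\}$. Indeed, any $S_j$ satisfying conditions (1)--(2) of the definition must contain $R_j$ (otherwise some edge from $A_j$ to $B_j\setminus S_j$ survives), and $R_j$ itself satisfies (1)--(2), so $R_j$ is the unique minimal choice and $\vsn(G,\lgg)=\max_j|R_j|$. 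I will also use the symmetric ``left boundary'' $L_j:=\{v:\ \lgg(v)\le j,\ \exists u\sim v,\ \lgg(u)>j\}$; reversing the order ($\bar{\lgg}(v)=n+1-\lgg(v)$) gives $L_j(\bar{\lgg})=R_{n-j}(\lgg)$, so $\vsn(G)=\min_{\lgg}\max_j|L_j(\lgg)|$ as well.

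For $\pw(G)\le\vsn(G)$: take a minimal order $\lgg$, write $v_j=\lgg^{-1}(j)$, and form the sequence of bags $(Y_n,Y_{n-1},\dots,Y_1)$ with $Y_j:=\{v_j\}\cup R_j$. Since $v_j\notin R_j$, each bag has size at most $1+\vsn(G)$, so the width is at most $\vsn(G)$. Every vertex lies in its own bag, giving axiom~(1). For an edge $(v_a,v_b)$ with $a<b$ one has $v_b\in R_a$ (it has the neighbor $v_a\in A_a$ and $\lgg(v_b)>a$), so $v_a,v_b\in Y_a$, giving axiom~(2). Finally, $v_a\in Y_j$ iff $j=a$, or $j<a$ and $v_a$ has a neighbor at a position $\le j$; as $j$ ranges, this is an integer interval ending at $a$, and the corresponding bags occur consecutively in the listing, giving axiom~(3). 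Hence this is a valid path decomposition of width $\le\vsn(G)$.

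For $\vsn(G)\le\pw(G)$: fix a \emph{nice} path decomposition $(X_1,\dots,X_m)$ of width $k=\pw(G)$ (consecutive bags differ by exactly one vertex, every vertex is introduced once and forgotten once; a decomposition of the same width with these properties always exists). Let $\lgg$ order the vertices by the moment they are introduced, $v_j=\lgg^{-1}(j)$, and for a vertex $v$ write $[a_v,b_v]$ for the interval of indices of bags containing $v$. I claim $\max_j|L_j(\lgg)|\le k$, which together with the reversal remark gives $\vsn(G)\le k$. Fix $j<n$ and let $X_t$ be the bag immediately before $v_{j+1}$ is introduced, so $X_{t+1}=X_t\cup\{v_{j+1}\}$ with $v_{j+1}\notin X_t$. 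The crux is the inclusion $L_j(\lgg)\subseteq X_t$: if $v\in L_j(\lgg)$ then $v$ is introduced no later than $v_j$, so $a_v\le t$, while its neighbor $u$ with $\lgg(u)>j$ is introduced no earlier than $v_{j+1}$, so $a_u\ge t+1$; the bag containing the edge $(u,v)$ then has index $\ge a_u\ge t+1$ and at the same time $\le b_v$, so $b_v>t$, and therefore $t\in[a_v,b_v]$, i.e.\ $v\in X_t$. Since $X_t$ has exactly one fewer vertex than $X_{t+1}$, which has size $\le k+1$, we get $|L_j(\lgg)|\le|X_t|\le k$; and $L_n(\lgg)=\emptyset$. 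Combining with the first inequality gives $\vsn(G)=\pw(G)$.

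The step I expect to be the main obstacle is the bound $\vsn(G)\le\pw(G)$, and specifically the need to avoid an off-by-one error: the naive argument — bounding $|L_j(\lgg)|$ by the size of the first bag of $v_j$ itself — only yields $k+1$. The fix is to use the bag immediately \emph{before} an introduce node: it still contains all of $L_j(\lgg)$ but is one vertex smaller than the next bag, which restores the tight bound and is exactly why one wants a \emph{nice} decomposition. The remainder is bookkeeping — verifying the three decomposition axioms in the first direction, and recalling the standard fact that a width-preserving nice path decomposition exists in the second.
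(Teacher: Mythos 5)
The paper only cites~\cite{Kinner} for this theorem and gives no in-text proof, so there is nothing internal to compare your argument against; I have checked it on its own merits, and it is correct. Your two-sided construction is, as far as I can tell, essentially Kinnersley's original argument. Both inequalities are sound: the observation that $\mvs(G,\lgg,j)$ is uniquely the right boundary $R_j$ is correct, as is the reversal $L_j(\bar{\lgg})=R_{n-j}(\lgg)$ (both extremal values $R_0,R_n$ are empty, so the shifted index range is harmless); the bag construction $Y_j=\{v_j\}\cup R_j$ verifies the three tree-decomposition axioms as you state, with the set of indices $j$ containing a fixed vertex $v_a$ being the integer interval $[\min\{\lgg(u):u\sim v_a\},\,a]$ (or just $\{a\}$); and in the converse direction, the inclusion $L_j(\lgg)\subseteq X_t$ for the bag just \emph{before} $v_{j+1}$ is introduced is exactly the right move, with $a_{v_j}<a_{v_{j+1}}=t+1$ guaranteeing $t\ge 1$ so no boundary case is missed. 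You correctly flag that the nice decomposition is what rescues the bound from $k+1$ to $k$; this is indeed where a careless version of the argument would lose a unit. One small remark: in the first direction the reverse listing $(Y_n,\dots,Y_1)$ is not actually needed --- $(Y_1,\dots,Y_n)$ also satisfies axiom~(3), since the index set $\{j:v_a\in Y_j\}$ is an interval either way --- but this is a matter of taste, not a flaw.
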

\begin{theorem}\cite{Korach} \label{Korach} For any graph $G$, 
	$\pw(G) = O(\tw(G) \log{n})$.
\end{theorem}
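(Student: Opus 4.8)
The plan is to convert an arbitrary tree decomposition of $G$ of width $w:=\tw(G)$ into a path decomposition of width $O(w\log n)$, by a divide-and-conquer recursion on the decomposition tree through its centroid. First I would arrange that the tree decomposition $(T,\{X_t\}_t)$ has at most $n+1$ nodes: starting from an optimal one, repeatedly contract an edge $\{s,t\}$ of $T$ whenever $X_s\subseteq X_t$ (merging the two bags into $X_t$), which preserves all three tree-decomposition axioms and does not change the width; when no such edge remains, root $T$ and charge each non-root node $t$ to any vertex of $X_t\setminus X_{\parent{t}}$ (nonempty, since now $X_t\not\subseteq X_{\parent{t}}$). This charging is injective, because the set of nodes whose bag contains a fixed vertex is a subtree of $T$, and a rooted subtree has a unique node whose parent lies outside it; hence $|T|\le n+1$.

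Next I would prove, by induction on $m=|T|$, that a graph admitting a width-$w$ tree decomposition on $m$ nodes has a path decomposition of width $f(m)$, where $f(1)=f(2)=w$ and $f(m)\le f(\lfloor m/2\rfloor+1)+(w+1)$ for $m\ge 3$; this recurrence solves to $f(m)=O(w\log m)$. The cases $m\le 2$ are trivial (lay the one or two bags out in a line). For $m\ge 3$, pick a centroid node $c$ of $T$, i.e.\ one such that every component of $T-c$ has at most $\lfloor m/2\rfloor$ nodes (every tree has such a node). For each component $T_j$ of $T-c$, set $T_j':=T_j\cup\{c\}$, a connected subtree of $T$ with at most $\lfloor m/2\rfloor+1$ nodes, and $G_j:=G[\bigcup_{t\in T_j'}X_t]$; restricting the bags to $T_j'$ gives a valid width-$\le w$ tree decomposition of $G_j$, so by induction $G_j$ has a path decomposition $P_j$ of width $\le f(\lfloor m/2\rfloor+1)$. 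I then concatenate $P_1,\dots,P_d$ and add the set $X_c$ to every one of their bags. Checking the axioms: every edge of $G$ lies in some bag $X_s$, which is either $X_c$ (now in every bag) or a bag of some $G_j$ (hence still covered); a vertex of $X_c$ lies in every bag; and any vertex $v\notin X_c$ occurs in exactly one component $T_j$ (the set of nodes whose bag contains $v$ is a subtree of $T$ avoiding $c$), so its contiguous block of bags inside $P_j$ stays contiguous in the concatenation. Since $|X_c|\le w+1$, the resulting decomposition has width $\le f(\lfloor m/2\rfloor+1)+(w+1)$, closing the induction; feeding in the tree decomposition with $\le n+1$ nodes from the first step yields $\pw(G)\le f(n+1)=O(w\log n)=O(\tw(G)\log n)$.

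The step I expect to be the crux is the claim that restricting a tree decomposition to the connected subtree $T_j'$ is again a tree decomposition of $G_j=G[\bigcup_{t\in T_j'}X_t]$ — concretely, that every edge of $G_j$ is still covered by a bag indexed in $T_j'$. This is where the Helly property of subtrees of a tree is used: for an edge $(u,v)$ of $G_j$, the three subtrees $\{t:u\in X_t\}$, $\{t:v\in X_t\}$ and $T_j'$ are pairwise intersecting (the first two meet because $(u,v)\in E(G)$; each meets $T_j'$ because $u,v\in\bigcup_{t\in T_j'}X_t$), so they share a common node, and its bag contains both $u$ and $v$. The remaining ingredients — the centroid lemma, solving the recurrence for $f$, and the interval-property bookkeeping — are routine.
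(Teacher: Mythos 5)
The paper does not prove Theorem~\ref{Korach}; it cites it directly from Korach and Solel and uses it as a black box, so there is no internal argument to compare your proof against. Your proof --- normalize to a tree decomposition on at most $n+1$ nodes, then recurse through the centroid of the decomposition tree, adding the centroid bag $X_c$ to every bag of the concatenated path decompositions of the pieces --- is correct and is essentially the standard argument behind this bound. In particular, the injective charging that bounds $|T|\le n+1$ is sound; the Helly property of subtrees of a tree is exactly what is needed to certify that restricting the bags to the connected subtree $T_j'$ yields a valid tree decomposition of the induced subgraph $G_j$; and the recurrence $f(m)\le f(\lfloor m/2\rfloor+1)+(w+1)$ with $f(1)=f(2)=w$ does solve to $O(w\log m)$, since $(\lfloor m/2\rfloor+1)-2\le (m-2)/2$ so the shifted quantity halves each level. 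Feeding in $m\le n+1$ gives $\pw(G)=O(\tw(G)\log n)$ as required.
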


\section{Single-Flaw dynamics}

Let $G=(V,E)$ be a  graph with maximal degree $\Delta$ and  $\eps>0$ such that $(1+\eps)\Delta \geq \Delta+2$. 
The state space $\Omega$ of  Markov chain $\mc{(G,\eps)}$ (or simply $\mc$) is the set of all proper and singly-flawed $k$-colorings of $G$, for $k=\lceil(1+\eps)\Delta\rceil$: $\Omega = \C_p(G,k) \cup \C_{sf}(G,k)$. For simplicity, we henceforth assume that $\eps\Delta$, $(1+\eps)\Delta$ and $(1+\eps)\eps^{-1}$ are integers. It is easy to generalize the  results to real values thereof.
For $\sigma \in \Omega$, the transitions $\sigma \rightarrow \sigma'$ of $\mc$ are the following

\begin{itemize}
	\item Let $\sigma'=\sigma$.
	\item With probability $1/2$, do nothing (laziness).
	\item Otherwise, choose a vertex $v$ and color $c$ uniformly at random from $V$ and $[k]$ respectively.   Tentatively, set $	\sigma'(v)=	c$
	\item If $\sigma' \notin \Omega$, set $\sigma'(v) = \sigma(v)$.
\end{itemize}

It is easy to verify that the chain is irreducible, aperiodic, lazy  and symmetric;  hence the conditions of Theorem~\ref{thm:sinclair} hold, and it remains to describe a flow with low congestion of the underlying graph of $\mc$.
Our main result describes such a flow.
\begin{lemma}\label{lemma:main}
	Let $\eps>0$ and $G$ be a graph with maximal degree $\Delta$. 
	Then there exists a flow $f$ on the underlying graph of the Markov chain $\mc{(G, \eps)}$ such that
	$$\rho_f \leq   8\pw(G)(1+\eps)^3\Delta^3 n^5  \left( (1+\eps)\eps^{-1}\right) ^{2\pw(G)}.$$
\end{lemma}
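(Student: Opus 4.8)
The plan is to construct the flow $f$ explicitly using the canonical-path skeleton described in the introduction, but with flow-splitting at every fixing step, and then bound the congestion $\rho_f(\sigma,\sigma')$ edge by edge via an encoding (injective-map) argument. Fix a minimal order $L = (v_1,\dots,v_n)$ for $G$, so that $\vsn(G,L) = \vsn(G) = \pw(G)$ by Theorem~\ref{Kinner}. For each ordered pair $(\alpha,\beta)\in\Omega^2$, I would define a set of paths $\paths_{\alpha,\beta}$ in $H$ as follows: process vertices $v_1,\dots,v_n$ in order; at stage $i$, recolor $v_i$ to $\beta(v_i)$; if this produces a coloring in $\Omega$ (proper or singly-flawed) move on; otherwise we are in a fixing phase where some set of neighbors of $v_i$ currently carry color $\beta(v_i)$ and must be recolored one at a time to an available color, and here the flow $f(p)$ is split \emph{uniformly} over all available colors for each fixed vertex (each such vertex has at least $\eps\Delta$ available colors). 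Because $\alpha,\beta$ range over $\Omega$, the weight assigned is $\pi(\alpha)\pi(\beta) = |\Omega|^{-2}$, spread over the (at most polynomially many, times a branching factor) paths in $\paths_{\alpha,\beta}$; this satisfies~\eqref{eq:flow}.

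Next I would bound the length $|p|$ of every path in the support: each of the $n$ main steps triggers at most $\Delta$ fixes, each fix is one transition, so $|p| \le n(\Delta+1) \le 2n\Delta$. The ergodic flow through any edge $(\sigma,\sigma')$ is $q(\sigma,\sigma') = \pi(\sigma)P(\sigma,\sigma') = |\Omega|^{-1}\cdot\frac{1}{2nk}$, since $P(\sigma,\sigma')\ge \frac{1}{2nk}$ for any edge of $H$. Plugging into~\eqref{eq:congestion},
\[
\rho_f(\sigma,\sigma') \;=\; \frac{1}{q(\sigma,\sigma')}\sum_{(\alpha,\beta)}\ \sum_{\substack{p\ni(\sigma,\sigma')\\ p\in\paths_{\alpha,\beta}}} f(p)\,|p|
\;\le\; 2nk\,|\Omega|\cdot 2n\Delta \cdot \max_{(\sigma,\sigma')}\!\!\sum_{(\alpha,\beta)}\ \sum_{\substack{p\ni(\sigma,\sigma')\\ p\in\paths_{\alpha,\beta}}} \frac{f(p)}{\pi(\alpha)\pi(\beta)\,}\cdot\frac{1}{|\Omega|}.
\]
So everything reduces to bounding $\sum_{(\alpha,\beta)}\sum_{p\ni(\sigma,\sigma')} f(p)/(\pi(\alpha)\pi(\beta))$, i.e.\ the total relative flow through a fixed edge. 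This is where the encoding argument enters: given the transition $(\sigma,\sigma')$ and a "witness" coloring $\sigma^*$ — obtained, as in the toy discussion, by gluing the $\alpha$-part and the $\beta$-part at the current stage $j$ — the pair $(\alpha,\beta)$ is \emph{almost} recoverable. The defect is exactly the fixing phase: when several neighbors of $v_j$ are mid-fix, $\sigma^*$ does not record their $\alpha$-colors, and the current partially-fixed coloring $\sigma$ is ambiguous. But the set of vertices whose color is "in flux" at stage $j$ all lie in (a bounded neighborhood determined by) the minimal vertex separator $S_j$ — this is precisely the point of choosing $L$ to be a minimal order. Hence there are at most $\vsn(G,L) = \pw(G)$ such vertices in the separator and at most $\Delta$ many fixed per vertex; to recover $(\alpha,\beta)$ we must guess their $\alpha$-colors, costing a factor $\Delta^{O(\pw(G))}$, and the flow-splitting contributes the reciprocal, each fixed vertex having $\ge\eps\Delta$ choices so a split factor $(\eps\Delta)^{-1}$ per fix, giving $\bigl((1+\eps)\Delta/(\eps\Delta)\bigr)^{O(\pw(G))} = \bigl((1+\eps)\eps^{-1}\bigr)^{O(\pw(G))}$. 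Also $\sigma^*$ need not be a valid coloring in $\Omega$, but it is a genuine $k$-coloring that differs from an element of $\Omega$ in a bounded way, contributing another polynomial factor.

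Assembling: the number of $(\alpha,\beta)$ routing relative flow through $(\sigma,\sigma')$, weighted by the split factors, is at most $|\Omega|\cdot \poly(n,\Delta)\cdot\bigl((1+\eps)\eps^{-1}\bigr)^{2\pw(G)}$ (the exponent $2\pw(G)$ rather than $\pw(G)$ because both endpoints $\alpha$ and $\beta$ may be mid-fix near their respective separators). Combined with the prefactor $2nk\cdot 2n\Delta = 4n^2\Delta k \le 4n^2\Delta(1+\eps)\Delta$ and tracking the remaining $\poly(n)$ from the "$\sigma^*\notin\Omega$" correction and from the number of stages $j\le n$, one obtains a bound of the stated shape $8\pw(G)(1+\eps)^3\Delta^3 n^5\bigl((1+\eps)\eps^{-1}\bigr)^{2\pw(G)}$.

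\textbf{Main obstacle.} The delicate part is making the decoding argument fully rigorous: precisely defining $\sigma^*$ so that the "flux" vertices are genuinely confined to a set of size $O(\pw(G))$, handling the singly-flawed states (the flawed vertex of $\sigma$ or $\sigma'$ interacts with the fixing phase and must be bookkept separately, using Corollaries~\ref{cor:same}–\ref{cor:sur} to absorb the extra factor $kn$), and verifying that the uniform flow-split over available colors exactly cancels the guessing cost so that the product of (number of preimages)$\times$(split weight) stays $\bigl((1+\eps)\eps^{-1}\bigr)^{O(\pw(G))}$ rather than blowing up to $\Delta^{O(\pw(G))}$. I expect this case analysis — rather than the arithmetic of collecting the polynomial factors — to be the crux of the proof.
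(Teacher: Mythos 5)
Your overall architecture matches the paper's: canonical paths following a vertex separation order, uniform flow-splitting over available colors in the fixing steps, an encoding map from transitions to (nearly) states to bound per-edge relative flow, confinement of ``flux'' vertices to the separator $S_j$, and a final reduction to Theorem~\ref{thm:sinclair} with the singly-flawed endpoints absorbed via Corollary~\ref{cor:sur}. There are, however, two places where the proposal is not just under-detailed but would break down as stated, plus one structural divergence from the paper worth noting.

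\paragraph{The path construction differs, and the paper's choice is not cosmetic.} You only recolor the neighbors of $v_j$ that are currently in conflict; the paper instead performs exactly $|S_j|+1$ steps in phase $j$, recoloring \emph{every} vertex of $S_j$ in a fixed (lexicographic) order, counting idle recolorings as steps. This is what makes the path length and the layer structure $\states_{\cc}(j,\ell)$ deterministic given $(j,\ell)$ alone, and it is what makes Observation~\ref{obs:1} (that the flux set $\Q(j,\ell)$ is determined by $(j,\ell)$, not by $(\alpha,\beta)$) true. With your scheme, which vertices are fixed at phase $j$ depends on the actual colors present, so $\Q$ would become data-dependent and the encoding bookkeeping would have to carry that extra information.

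\paragraph{Genuine gap: the witness coloring $\sigma^*$ must be a \emph{proper} coloring, and you have no mechanism to make it one.} You propose to let $\sigma^*$ be a possibly improper ``glued'' coloring and argue this ``contributes another polynomial factor.'' This is false in the relevant regime. If $\sigma^*$ differs from a proper coloring on the $|S_j|=O(\pw(G))$ separator vertices and each such vertex may take any of $k$ colors, the penalty is $k^{O(\pw(G))}$, which for $\pw(G)=\Theta(\log n)$ and growing $\Delta$ is quasi-polynomial, not polynomial, and would destroy the theorem. The paper avoids this precisely with the family of functions $\chi_C$ (Observation~\ref{obs:ell}): the $X$-component of the encoding is forced to be a bona fide proper coloring by choosing the color of each separator vertex via $\chi$, and the price paid is only the extra $Y$-component of size $\bigl((1+\eps)\eps^{-1}\bigr)^{|S_j|}$, not $k^{|S_j|}$. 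This is exactly where a factor $((1+\eps)\eps^{-1})^{|S_j|}$ enters that your argument does not produce, and it is the single most substantive idea in the proof that your proposal is missing.

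\paragraph{Genuine gap: the flow-split bound does not compound automatically.} You assert the split factor $(\eps\Delta)^{-1}$ per fix ``contributes the reciprocal,'' but the first step of every phase \emph{consolidates} flow (all states differing only in vertex $j$ merge), so it is not obvious that the bound $f_{j,t,\ca,\cb}\le \pi(\ca)\pi(\cb)/(\eps\Delta)^{|\Q(j,t)|}$ survives. The paper proves this as Claim~\ref{claim:fuckme} by reordering the flux vertices in reverse chronological order of their last recoloring and running an induction over that order; the consolidation steps are then harmless because they only touch a vertex that has left the flux set. You flag the whole fixing-phase bookkeeping as the ``crux,'' which is fair, but this specific device — and the fact that it is what makes the exponent $|\Q(j,t)|$ rather than something tracking the total number of fixes — is the other essential ingredient you would need to supply.

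\paragraph{Minor.} Your explanation of why the exponent is $2\pw(G)$ (``both endpoints $\alpha$ and $\beta$ may be mid-fix near their respective separators'') is not what happens: both $\ca$ and $\cb$ are proper. The factor of $2$ comes from recovering $\beta$ on $S_j$ (the $Y$-component, one factor of $((1+\eps)\eps^{-1})^{|S_j|}$) and, separately, the ratio $k^{|\Q(j,t)|}/(\eps\Delta)^{|\Q(j,t)|}\le ((1+\eps)\eps^{-1})^{|S_j|}$ coming from $Z$ and the flow-split bound.
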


There are at most $k^n$ possible  colorings of $G$. 
Because the stationary distribution is uniform, for all $\sigma \in \Omega$, $\pi(\sigma) \geq \frac{1}{k^n}$, hence  $\ln \pi(\sigma)^{-1} = O(n \log{n})$.

Theorem~\ref{thm:sinclair} and Lemma~\ref{lemma:main} together imply Theorem~\ref{thm:main}, which is  formally restated as follows.
\begin{maintheorem}
	Let $\eps>0$ and $G$ be a graph with  maximal degree $\Delta$. The mixing time of  $\mc{(G, \eps)}$ satisfies
	$$\tau(\delta) =   O\left( \pw(G)(1+\eps)^3\Delta^3 n^5   \left( (1+\eps)\eps^{-1}\right) ^{2\pw(G)} \left( n\log{n} +\ln \delta^{-1}\right)\right) .$$
	In particular, if $\pw(G) = O(\log{n})$, the chain mixes in polynomial time.
\end{maintheorem}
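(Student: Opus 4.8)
The plan is to construct the promised flow $f$ on the underlying graph $H=(\Omega,F)$ of $\mc$ by combining the hybrid canonical-paths/multicommodity-flow idea from the introduction with the vertex-separation machinery, and then to bound the congestion $\rho_f(\sigsig)$ edge by edge. First I would fix a linear order $\lgg$ on $V$ achieving $\vsn(G)=\pw(G)$ (such an order exists by Theorems~\ref{Kinner} and~\ref{Korach}; we only need existence, not an algorithm). For each ordered pair $(\cc)\in\Omega^2$ I would define $\paths_{\cc}$ as the set of paths obtained by the ``fix-and-split'' procedure: process vertices $v_1,\dots,v_n$ in the order $\lgg$; at step $i$ recolor $v_i$ to $\cb(v_i)$ (this consolidates the flow), and if that creates monochromatic edges, recursively fix each offending neighbor by splitting the incoming flow uniformly over all colors available to it. Set $f(p)=\pi(\ca)\pi(\cb)\cdot(\text{product of the branching probabilities along }p)$, so \eqref{eq:flow} holds by construction.

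Next I would set up the injective-encoding argument that is the heart of the congestion bound. Fix a transition $t=(\sigsig)\in F$; say $t$ recolors vertex $v_j$. For a path $p\in\paths_{\cc}$ through $t$, I would exhibit an ``encoding'' $\sigma^{*}=\sigma^{*}(\ca,\cb,t,p)$ — morally $\ca$ on the already-processed prefix and $\cb$ on the suffix, but patched on the at-most-$\pw(G)$ vertices currently involved in an unfinished fixing cascade — together with a bounded amount of side information (which vertices are patched, and the colors of those patched vertices under $\ca$ and $\cb$). The key claim is that from $t$, $\sigma^{*}$, and the side information one can recover $\ca$ and $\cb$, so the map $p\mapsto(\sigma^{*},\text{side info})$ is injective on $\bigcup_{\cc}\{p\in\paths_{\cc}:t\in p\}$. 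Since $\sigma^{*}\in\Omega$ and the side information ranges over a set of size at most $n\cdot k^{2\pw(G)}$-ish, this yields $\sum_{\cc}\sum_{p\ni t} \mathbf{1}\le |\Omega|\cdot(\text{poly}\cdot k^{2\pw(G)})$ many $(\ca,\cb,p)$ triples; weighting by $f(p)|p|$ and using $|p|\le \text{poly}(n)$, $f(p)\le \pi(\ca)\pi(\cb)\le \pi(\ca)/|\Omega'|$ (where $\Omega'=\C_p$, using Corollary~\ref{cor:same} to relate $|\Omega|$ and $|\C_p|$), and $q(\sigsig)\ge \pi(\sigma)\cdot\frac{1}{2nk}$, I would collect the factors to match the stated bound $8\pw(G)(1+\eps)^3\Delta^3 n^5\big((1+\eps)\eps^{-1}\big)^{2\pw(G)}$. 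The $\big((1+\eps)\eps^{-1}\big)^{2\pw(G)}$ term is exactly the cost of the split: each fixed vertex has $\ge\eps\Delta$ available colors out of $\le(1+\eps)\Delta$, so each branching factor loses at most $(1+\eps)\eps^{-1}$, and at most $2\pw(G)$ vertices are ever simultaneously fixed (bounded via the vertex separator $S_j$, since a fixing cascade started at $v_j$ stays within $A_j\cup S_j$ and only the $S_j$-vertices and the ``frontier'' cross the cut — this is where Theorem~\ref{Kinner} is used).

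Finally, Theorem~\ref{thm:sinclair} with $\pi(\sigma_0)^{-1}\le k^n$, so $\ln\pi(\sigma_0)^{-1}=O(n\log n)$, turns $\rho_f$ from Lemma~\ref{lemma:main} directly into the claimed mixing-time bound $\tau(\delta)=O\big(\pw(G)(1+\eps)^3\Delta^3 n^5((1+\eps)\eps^{-1})^{2\pw(G)}(n\log n+\ln\delta^{-1})\big)$, and when $\pw(G)=O(\log n)$ the exponential term $((1+\eps)\eps^{-1})^{2\pw(G)}$ is $n^{O(1)}$, giving polynomial mixing.

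I expect the main obstacle to be the injectivity/recovery step in the presence of the fixing cascades: unlike the disconnected toy case, knowing $t$ and the final encoding $\sigma^{*}$ is not obviously enough to reconstruct which vertices are ``in transition'' and what their $\ca$- and $\cb$-colors were, and the toy star example shows a naive encoding is catastrophically non-injective. Getting this right requires carefully choosing what minimal side information to append (leveraging that the set of in-transition vertices is contained in the separator $S_j$, hence small and structured) and checking that the split probabilities themselves can be reconstructed so that $f(p)$ is determined — I would budget most of the proof's effort here, and treat the arithmetic of combining the poly factors as routine bookkeeping.
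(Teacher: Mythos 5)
Your proposal captures the right high-level architecture (multicommodity flow routed along a vertex-separation order, an encoding argument for the congestion, Theorem~\ref{thm:sinclair} to finish), but the central technical idea of the paper's construction is missing, and it is exactly the piece you flag as ``the main obstacle'' without resolving it. You describe the canonical paths as: recolor $v_i$ to $\cb(v_i)$, then \emph{if} this creates monochromatic edges, recursively recolor the offending neighbors. This is precisely the na\"ive scheme from the paper's introduction, which the star example shows is irreparably non-injective: the set of ``in-cascade'' vertices and the number of splitting steps then depend on the particular pair $(\ca,\cb)$, so from a transition $t$ and an encoding $\sigma^*$ one cannot tell which vertices are currently in transition, and the flow into a single layer state can blow up by $2^{\Omega(n)}$.

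The paper's fix, which your sketch does not supply, is to make the path structure \emph{oblivious}: in phase $j$, after recoloring vertex $j$ to $\cb(j)$, one recolors \emph{every} vertex of the separator $S_j$ (in a fixed order), splitting the flow uniformly over all colors available, whether or not any monochromatic edge was created. Recolorings may be idle. This forces every canonical path to have the same length, and — crucially — makes the set $\Q(j,\ell)$ of ``undetermined'' vertices a deterministic function of $(j,\ell)$ alone (Observation~\ref{obs:1}), independent of $(\ca,\cb)$. That determinism is what lets Claim~\ref{claim:fuckme} bound the flow per pair by $\pi(\ca)\pi(\cb)/(\eps\Delta)^{|\Q(j,t)|}$ via an induction over the quantum vertices reordered by last recoloring time, and what lets the injective map $\mu_{j,t}:\cp_{j,t}\to X\times Y\times Z$ in Lemma~\ref{lem:fl} (using the $\chi_C$ functions to compress the $\cb$-colors on $S_j$) bound the number of contributing pairs. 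Your reading of the $\big((1+\eps)\eps^{-1}\big)^{2\pw(G)}$ term as ``$2\pw(G)$ vertices simultaneously fixed'' is also off: at most $|S_j|\le\pw(G)$ vertices are recolored per phase, and the factor of $2$ arises from multiplying the per-pair flow bound ($((1+\eps)\eps^{-1})^{|S_j|}$ from $k^{|\Q|}/(\eps\Delta)^{|\Q|}$) by the pair-count bound (another $((1+\eps)\eps^{-1})^{|S_j|}$ from the $Y$ component of the encoding). Finally, you note the need to relate $|\Omega|$ and $|\C_p|$ but omit the explicit routing extension for flows with singly-flawed endpoints via $g'$ from Corollary~\ref{cor:sur}, which contributes the extra $4k^2n^2$ factor turning Lemma~\ref{lem:cong} into Lemma~\ref{lemma:main}; that bookkeeping is genuinely needed to reach the stated constants.
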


In order to prove Lemma~\ref{lemma:main}, we design a flow for $\mc$. To do so, we first describe the set of canonical paths\footnote{We note that the term ``canonical paths'' is traditionally used for describing  a single path between every pair of states. In our case, it means a set of paths for each state.} that we will route the flow through.

\subsection{Canonical paths}

Let $L$ be some minimal order of $G$. For simplicity and w.l.o.g. we assume that $L$ is the identity order, i.e., $L(i)=i$ for all $i\in [n]$. 
We remark that we do not need to explicitly find $L$; we only require its existence for the canonical paths argument. Let $\lambda = \frac{\vsn(G)}{\log{n}}$. If $\pw(G)=O(\log{n})$, as is assumed here, $\lambda$ is a constant. 

For each $j \in [n]$, let $S_j=\mvs(G,L,j)$ be a minimal vertex separator.
We denote the set of canonical paths from $\ca$ to $\cb$ by $\gamma_{\cc}$. \emph{ For the rest of this subsection and the next, we assume that $\ca$ and $\cb$ are both proper colorings}; we will extend the sets of  paths to include ones that start and/or end at singly-flawed colorings in Section~\ref{sec:genflows}.  We divide each path into $n$ phases, where phase $j$ consists of $|S_j|+1$ steps, for a total of $\length=n+\sum_{j=1}^n |S_j| \leq (\lambda+1) (n\log{n})$ steps. Each step is a recoloring of some vertex; it is possible that a vertex is ``recolored'' with the same color. In that case, the state (coloring) does not change, but we still count this redundant recoloring as a step, as it guarantees that all  paths are of the same length; this will help to make the analysis more concise. A state that appears at the start of the $\ell^{th}$ step of the $j^{th}$ phase of a canonical path in $\gamma_{\cc}$ is said to be at \emph{distance} $(j,\ell)$ from $\ca$; alternatively, we say that it happens at \emph{time} $(j,\ell)$. 
We denote the states of $\gamma_{\cc}$ that are at distance $(j,\ell)$ from $\alpha$ by $\states_{\cc}(j,\ell)$. 
The set of canonical paths $\gamma_{\cc}$ can be thought of as a layered graph, where all states of $\states_{\cc}(j,\ell)$ are placed in the same layer. 

\subsubsection{A phase of the canonical paths}
We  describe a single phase of $\gamma_{\cc}$. For any $j \in [n]$, all states in $\states_{\cc}(j,1)$ are proper colorings. Note that $\states_{\cc}(1,1) = \{\alpha \}$. 
For {\bf the  first step} of the $j^{th}$ phase, for every $\sigma_i \in \states_{\cc}(j,1)$, set 

\begin{align*}
\sigma'_i(v) = 
\begin{cases}
\sigma_i(v) &\text{if }v \neq j\\
\beta(j)& \text{if }v = j
\end{cases}
.
\end{align*}

We therefore have that  $\states_{\cc}(j,2) = \bigcup_i \sigma'_i$. It is clear that there is only one way to route the flow entering $\sigma_i$: it is all routed to $\sigma'_i$ on $(\sigma_i, \sigma'_i)$. It is possible that flow becomes consolidated in this step: the flow from all states $\sigma_i \in \states_{\cc}(j,1)$ that differ only in the $j^{th}$ coordinate is routed to the same $\sigma'_i$.
Note that every $\sigma' \in \states_{\cc}(j,2)$ is either a  proper or a  singly-flawed coloring.  

{\bf The $\ell^{th}$ step} in the $j^{th}$ phase, $\ell \in \{2, 3, \ldots |S_j|+1\}$  is  a \emph{splitting step}, and is the following:
Let  $u_\ell$ be the (lexicographically) $(\ell-1)^{th}$ vertex  of $S_j$.  
For every $\sigma_i \in \states_{\cc}(j,\ell)$, let $C_i(u_\ell)$ be the set of  colors available to $u_\ell$ under $\sigma_i$. For each $\sigma_i \in \states_{\cc}(j,\ell)$ and color $c \in C_i(u_\ell)$, let

\begin{align*}
\sigma'_{i,c}(v) = 
\begin{cases}
\sigma_{i}(v) &\text{if }v \neq u_\ell\\
c& \text{if }v = u_\ell 
\end{cases}
.
\end{align*}

We have that $\states_{\cc}(j,\ell+1) = \bigcup_i \sigma'_{i,c}$. From each state $\sigma_i \in \states_{\cc}(j,\ell)$, the flow is split  evenly among the transitions (i.e., a $1/|C_i(u_\ell)|$ fraction of the flow entering $\sigma_{i}$ is routed on each $(\sigma_i,\sigma'_{i,c})$). Note that all states in  $\states_{\cc}(j,|S_j|+2)$ are  proper colorings, as any edge that may have been monochromatic in  any $\sigma \in \states_{\cc}(j,2)$ will have been recolored.  For all $1\leq j<n$, set $\states_{\cc}(j+1,1) = \states_{\cc}(j,|S_j|+2)$.
Note that $S_n=\emptyset$ and $\states_{\cc}(n,2)=\{\beta\}$.

Because for all $j$, $|S_j| = O(\log{n})$, given $\alpha, \beta, j$ and $\ell$,  the color of most vertices in $\states_{\cc}(j,\ell)$ is uniquely determined. In particular, for $\ell>1$, denote $A_j = \{v: v\leq j\}$ and $B_j = V \setminus (A_j \cup S_j)$.\footnote{For $\ell=1$, we consider $(j-1,|S_{j-1}|+2)$ instead of $(j,1)$, unless $j=1$, in which case the vertices are all colored by $\ca$.} It must hold that for any $\sigma \in \states_{\cc}(j,\ell)$, $v_a \in A_j$, and $v_b \in B_j$,  $\sigma(v_a)=\beta(v_a)$ and $\sigma(v_b) = \alpha(v_b)$. This is because all the vertices in $A_j$ have been recolored to $\beta$ and will not be recolored again, while the vertices in $B_j$ have no neighbors in $A_j$, hence they have not been recolored yet.

We denote by $\Q(\alpha,\beta,j,\ell)$ ($\Q$ stands for ``quantum set'') the number of vertices whose color is not uniquely defined by $\alpha, \beta, j, \ell$. We note that $\Q(\alpha,\beta,j,\ell) \subseteq S_j$, but that equality does not necessarily hold:  assume that $S_{j-1} \subset S_{j}$, and let $u \neq j$ be some vertex in $S_j \setminus S_{j-1}$. Then $u$'s color is still $\ca(u)$ at time $(j,2)$, as it has not yet been recolored, even though $u \in S_j$.

We note that $\Q(\alpha,\beta,j,\ell)$ does not in fact depend on $\ca$ or $\cb$. In fact,
\begin{obs}\label{obs:1}$\Q(\alpha,\beta,j,\ell)$
	is uniquely determined by either
	\begin{enumerate}
		\item $j$ and $\ell$, or
		\item $j$ and $(\sigsig)$.
	\end{enumerate}
\end{obs}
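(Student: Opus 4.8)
The plan is to exploit that the canonical‑path construction, regarded as a \emph{schedule} specifying which vertex is recolored at each step (not which color it receives), makes no reference to $\alpha$ or $\beta$: it is fixed once $G$, the order $L$, and the separators $S_1,\dots,S_n$ are fixed. The colors supplied by $\alpha$ and $\beta$ merely fill in this schedule, and the content of the observation is that they have no bearing on \emph{which} vertices carry an undetermined color. Concretely, I would establish that $\Q(\alpha,\beta,j,\ell)$ equals a set that is a function of $j,\ell$ only, and then reduce part~(2) to part~(1) by showing a transition within phase $j$ reveals~$\ell$.

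First I would record the schedule: along any path of $\gamma_{\cc}$, at time $(j,1)$ the recolored vertex is $j$, and at time $(j,\ell)$ with $\ell\ge2$ it is the $(\ell-1)$st vertex of $S_j$ in lexicographic order. Iterating, the set $R(j,\ell)$ of vertices recolored at least once by time $(j,\ell)$ is
\[
R(j,\ell)=A_j\cup\bigcup_{i<j}S_i\cup\{u_2,\dots,u_{\ell-1}\},
\]
(with the obvious modification at $\ell=1$), which depends only on $j$ and $\ell$. Next I would pin down each vertex's color from $R(j,\ell)$. If $v\notin R(j,\ell)$, its color equals $\alpha(v)$ in every state of $\states_{\cc}(j,\ell)$. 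If $v\in A_j$, then since no vertex of $A_j$ lies in any $S_i$ (each $S_i$ consists of vertices $u$ with $L(u)>i$), the last recoloring of $v$ up to time $(j,\ell)$ is the step‑$1$ recoloring in phase $v$, so its color equals $\beta(v)$ in every state. If $v>j$ but $v\notin S_j$, then $v\in B_j$, so $v$ has no neighbour in $A_j$; by minimality of each separator $S_i$ with $i\le j$ this forces $v\notin S_i$, hence $v\notin R(j,\ell)$ and its color is again $\alpha(v)$ everywhere. This leaves precisely the vertices of $S_j\cap R(j,\ell)$, and each such $v$ genuinely has an undetermined color: it was set by a splitting step, and since $k\ge\Delta+2$ every set of available colors in a splitting step has size at least two, so the set of colors $v$ takes over $\states_{\cc}(j,\ell)$ has size at least two and cannot collapse before phase $v$ (the only step that forces $v$ to a single color). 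Therefore $\Q(\alpha,\beta,j,\ell)=S_j\cap R(j,\ell)$, which is determined by $j$ and $\ell$; this also reestablishes $\Q(\alpha,\beta,j,\ell)\subseteq S_j$.

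For part~(2), a transition lying in phase $j$ reveals its step index. If $\sigma\neq\sigma'$, the unique coordinate on which they disagree is the vertex recolored at that step; since within a single phase each vertex is recolored at most once (vertex $j$ at step $1$, each element of $S_j$ at exactly one splitting step), knowing $j$ together with this vertex determines $\ell$, and part~(1) applies. A ``null'' transition $\sigma=\sigma'$, arising when a vertex is recolored with its current color, still sits inside some phase $j$, and since the coloring — hence $\Q$ — is constant across a maximal run of null steps, the value of $\Q$ is unambiguous once $j$ is known.

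The crux is the middle case of the color analysis: that every vertex outside $S_j$ carries a determined color. This is exactly where the structure of minimal vertex separators enters — a vertex of $B_j$ has no neighbour in $A_j$, while a minimal separator $S_i$ with $i\le j$ cannot contain such a vertex, since deleting it would leave a valid separator. Everything else is bookkeeping about a fixed recoloring schedule.
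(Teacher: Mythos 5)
Your proof takes essentially the same approach as the paper's: the crux in both is that the recoloring \emph{schedule} (which vertex is touched at each step) is a function of $G$, $L$, and the separators alone, and hence so is the quantum set. You go somewhat further by computing $\Q(\alpha,\beta,j,\ell)=S_j\cap R(j,\ell)$ explicitly and, via minimality of the separators $S_i$, justifying why a vertex $v\in B_j$ can never have been recolored in an earlier splitting step -- a point the paper asserts (``they have not been recolored yet'') without argument. Two small imprecisions are worth flagging. First, the parenthetical ``no vertex of $A_j$ lies in any $S_i$'' is too strong: a vertex $v\leq j$ \emph{can} lie in $S_i$ for $i<v$ (since $S_i$ only requires $L(u)>i$); what you actually need, and what your conclusion uses, is that $v\notin S_i$ for $i\geq v$, so that after its phase-$v$, step-$1$ recoloring $v$ is never touched again. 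Second, the claim that $\Q$ is constant across a maximal run of null steps is not quite right: $R(j,\ell)$, and hence $\Q(j,\ell)=S_j\cap R(j,\ell)$, grows between consecutive steps even when the observed state does not change, since a splitting step that is idle on one branch still adds $u_\ell$ to the quantum set (the flow splits on other branches). The paper's own treatment of the idle-transition case is equally terse (``even if it is an idle recoloring, hence $\ell$ can be inferred''), so this does not distinguish your proof from the original, but it is a genuine wrinkle and your fix as stated does not close it.
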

\begin{proof}
	For any $\cc$, the same vertex is recolored at  $(j,\ell)$: at $\ell=1$, vertex $j$ is recolored; in all other instances, $u_\ell$ is recolored by at least $\eps\Delta$ different colors, regardless of $\cc$. Further, note that once a vertex $u$ is in such a ``quantum state'', it will remain in quantum state until it is colored $\beta(u)$ at distance $(u,1)$. Therefore, although we cannot recover the exact transitions used without knowledge of $\cc$, the set of vertices whose color is unknown at any given time is fixed. For the second observation, notice that $(\sigsig)$ recolors some specific vertex $u_\ell$ (even if it is an idle recoloring), hence $\ell$ can be inferred. 
\end{proof}
Due to Observation~\ref{obs:1}, we sometimes refer to $\Q(\alpha,\beta,j,\ell)$ by  $\Q(j,(\sigsig))$ or $\Q(j,\ell)$, depending on the context.

\subsection{Bounding the flow}
Let $\ca$ and $\cb$ be proper colorings, $t=(\sigsig) \in F$ be some transition, and $j \in [n]$ be an integer. Denote the flow routed through $t$ from $\ca$ to $\cb$ in phase $j$ by $f_{j,t,\ca,\cb}$.  Note that we are only considering the flow routed through $t$ {\bf during phase $j$}; it is possible that the canonical path passes through $t$ in several phases, possibly carrying a different flow each time. Intuitively, it seems natural that   after the flow was split evenly several times, ``not too much'' flow is routed through any state, as it has a specific combination of colors of the vertices of $\Q(j,t)$, and many such combinations are possible. It is not straightforward to show this, however, as   the colors of different vertices in each state are not independent.  The difficulty is compounded by the fact that flow is consolidated at the first step of every phase. Nevertheless, we can prove the following claim by rearranging the vertices of $\Q(j,t)$ and using an inductive reasoning on this new order.

\begin{claim}\label{claim:fuckme}
	The flow routed from $\ca$ to $\cb$  through any  $t=(\sigsig) \in F$ in any phase $j \in [n]$  is at most $$f_{j,t,\ca,\cb}\leq\frac{\pi(\ca)\pi(\cb)}{\left( \eps\Delta\right)^{|\Q(j,t)|} }.$$
\end{claim}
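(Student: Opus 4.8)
\textbf{Proof plan for Claim~\ref{claim:fuckme}.}

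The plan is to track how much flow can reach a single state $\sigma$ at time $(j,\ell)$ and argue that, because the splitting steps of phase $j$ have each divided the incoming flow by the number of available colors (which is always at least $\eps\Delta$, since $k \geq \Delta+2$ forces at least $k-\Delta \geq \eps\Delta$ colors available to any vertex), the flow at $\sigma$ is suppressed by a factor of roughly $(\eps\Delta)^{-r}$ where $r$ counts the splitting steps executed so far in this phase that have ``locked in'' a quantum-state vertex. The subtlety, as the paper flags, is that the first step of the phase \emph{consolidates} flow (many source colorings $\alpha$ and intermediate states collapse to the same $\sigma'_i$), so one cannot simply say ``each of $|\Q(j,t)|$ splits divided the flow.'' The resolution I would pursue is to reorganize the accounting: fix $\alpha, \beta$ and fix the transition $t=(\sigma,\sigma')$ occurring at time $(j,\ell)$. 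By Observation~\ref{obs:1}, the set $\Q(j,t)$ is determined, and every state on a canonical path from $\alpha$ to $\beta$ passing through $t$ in phase $j$ agrees with $\beta$ on $A_j$ and with $\alpha$ on $B_j$; the only freedom is the colors of the $|\Q(j,t)|$ vertices of the quantum set.

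First I would set up an induction on $\ell$ (within phase $j$, with $\alpha,\beta$ fixed) proving the statement: the total flow $f_{j,\ca,\cb}(\sigma)$ entering a state $\sigma \in \states_{\cc}(j,\ell)$ is at most $\pi(\alpha)\pi(\beta) / (\eps\Delta)^{m(\sigma)}$, where $m(\sigma)$ is the number of vertices $u \in \Q(j,\ell)$ that have \emph{already been recolored during phase $j$} (equivalently, $u \in S_j$ with lexicographic rank $< \ell-1$, intersected with $\Q(j,\ell)$) — so that $m$ of the quantum-set coordinates of $\sigma$ were set by splitting steps. The base case $\ell = 1$ (or $\ell=2$, right after the consolidating step) has $m = 0$ and is trivial since the total flow from $\alpha$ to $\beta$ is $\pi(\alpha)\pi(\beta)$ and a state's incoming flow is at most this. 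For the inductive step across a splitting step at $u_\ell$: each predecessor state $\sigma_i$ sends a $1/|C_i(u_\ell)|$ fraction of its incoming flow to $\sigma'_{i,c}$ for each available $c$; since $|C_i(u_\ell)| \geq \eps\Delta$, and since $\sigma = \sigma'_{i,c}$ determines $c = \sigma(u_\ell)$ and hence determines $\sigma_i$ uniquely (the predecessor differs from $\sigma$ only at coordinate $u_\ell$, which now holds $\alpha(u_\ell)$ at time $(j,2)$ but more care is needed here — see below), the flow into $\sigma$ is at most $\frac{1}{\eps\Delta}\,f_{j,\ca,\cb}(\sigma_i) \leq \frac{1}{\eps\Delta}\cdot \frac{\pi(\alpha)\pi(\beta)}{(\eps\Delta)^{m(\sigma)-1}}$, giving the bound. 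The key point making the predecessor unique is that \emph{within a single phase} a vertex is recolored at most once, so the coordinate $u_\ell$ of $\sigma_i$ is forced (it is $\alpha(u_\ell)$ if $u_\ell$'s turn has not come, or its value is otherwise reconstructible) — this is exactly where I would spend the most care.

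The claim as stated wants the exponent to be the \emph{full} $|\Q(j,t)|$, not just the number of already-recolored quantum vertices, so the induction above must be run to the \emph{end} of phase $j$: by time $(j, |S_j|+2)$ every vertex of $S_j$ has been recolored by a splitting step, and — this is the ``rearranging the vertices of $\Q(j,t)$'' remark in the text — one orders the quantum set so that the ones recolored earliest in the phase are handled first, ensuring that by the time $t$ occurs, the flow through the state $\sigma$ at the \emph{head} of $t$ reflects all $|\Q(j,t)|$ splits. Concretely: $\Q(j,t)$ consists of vertices that are in a quantum state at time $(j,\ell)$, meaning each has either already been split in phase $j$ or will be; I would show that the flow entering $\sigma$ has been divided by a factor $\geq \eps\Delta$ once for each vertex of $\Q(j,t)$ whose color in $\sigma$ differs from its ``default'' ($\alpha$-inherited) value, and then observe that consolidation at step 1 of the phase, followed by splitting, means every quantum coordinate's current value was produced by exactly one split since the last time that vertex was colored $\beta(\cdot)$ — which, crucially, cannot have happened within phase $j$ for $u \neq j$. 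I expect the main obstacle to be precisely this bookkeeping: making rigorous that ``consolidate then split'' does not lose a factor, i.e., that the consolidating first step of phase $j$ merges flows that were each already divided by the right amount in \emph{previous} phases, while the splits within phase $j$ contribute the remaining factors, so that no quantum coordinate is ever ``double-counted'' or ``uncounted.'' Handling the interaction between $S_{j-1}$ and $S_j$ (the example in the text: $u \in S_j \setminus S_{j-1}$ still colored $\alpha(u)$ at $(j,2)$) inside this accounting is the delicate case, and I would isolate it with the reordering lemma so that such vertices are treated as ``not yet split'' until their splitting step arrives.
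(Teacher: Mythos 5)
Your high-level plan is right --- reorganize the accounting so that each quantum-set vertex contributes one $(\eps\Delta)^{-1}$ factor --- and you correctly flag the consolidation step as the subtlety. But the induction you actually set up (on $\ell$ within phase $j$, exponent $m(\sigma) =$ the number of vertices of $\Q(j,\ell)$ \emph{recolored during phase $j$} so far) cannot, by itself, reach the target exponent $|\Q(j,t)|$. A vertex $u$ can lie in $\Q(j,\ell)$ precisely because its last splitting step happened in an \emph{earlier} phase $j' < j$ (say $u \in S_{j'} \cap S_j$, last recolored at phase $j'$, with its phase-$j$ turn coming after $\ell$); such a $u$ is in the quantum set at time $(j,\ell)$ but is \emph{not} counted by your $m(\sigma)$, so your within-phase induction gives a strictly weaker bound. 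You wave at the issue (``run the induction to the end of phase $j$,'' ``treat such vertices as not-yet-split until their splitting step arrives''), but running to the end of the phase changes neither the transition $t$ nor $\Q(j,t)$, and it does not retroactively supply the missing $(\eps\Delta)^{-1}$ factor for those vertices.

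The resolution in the paper's proof is to abandon the phase-local index $\ell$ altogether: order the vertices of $\Q(j,\ell)$ by time of their \emph{most recent} color change, allowing that time to fall in phase $j$, phase $j-1$, or earlier; then relabel the corresponding layers $\states_1,\dots,\states_M$ (the paper explicitly notes $\states_m$ may lie in a previous phase) and induct on $m$ along this cross-phase chronology. The base case uses that when vertex $1$ was last recolored it was necessarily a splitting step (a consolidating recoloring would make it $\beta$-colored and hence non-quantum), so at most a $1/(\eps\Delta)$ fraction of the total $\pi(\ca)\pi(\cb)$ flow carries a given color on vertex $1$; the inductive step is then the same ``split again'' observation. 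This is what makes the consolidating first step of each phase harmless: consolidation never touches a vertex that is currently in quantum state (it colors vertex $j$ to $\beta(j)$, removing it from the quantum set), so it never collapses two states that disagree on a quantum coordinate, and each quantum coordinate's current value really is the result of exactly one split. You would need to replace your within-phase induction with this cross-phase reordering to close the gap.
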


\begin{proof} From Observation~\ref{obs:1}, $(j,t)$ uniquely defines $(j,\ell)$. If no flow is routed from $\ca$ to $\cb$ through $t$ in phase $j$, the claim is trivially satisfied. Otherwise, we show that 

	\begin{equation*}\label{eq:ll}f_{j,\ell,\ca,\cb}\leq\frac{\pi(\ca)\pi(\cb)}{\left( \eps\Delta\right)^{|\Q(j,\ell)|} },
	\end{equation*}
	where $f_{j,\ell,\ca,\cb}$ is the maximal flow from $\ca$ to $\cb$ through any $t$ at distance $(j,\ell)$ from $\ca$.	Order the vertices of $\Q(j,\ell)$ in reverse order of the time since their last color change (possibly a null color change). That is, the vertex whose color changed most recently is last in the order. Let $M=|\Q(j,\ell)|$ and relabel the vertices of $\Q(j,\ell)$ by  $1, \ldots, M$ according to their place in this order. Similarly, relabel $\states_{\cc}(j',\ell')$, by $\states_{1}, \ldots, \states_{M}$, where $\states_{i}$ is the set of states at the time just \emph{after} vertex $i$ last changed its color. In other words, $\states_M=\states_{\cc}(j,\ell+1)$,$\states_{M-1}=\states_{\cc}(j,\ell)$, and so on. It is possible that for some $m$, $\states_m$ corresponds to states in the previous phase, i.e., $\states_m =\states_{\cc}(j-1,\ell')$.  
	Note that we drop the $\cc$ from the notation for clarity, but we are still only considering  the flow from $\ca$ to $\cb$.

	We now show by that   for any set of $m \leq M$ colors $c_1, \ldots, c_m$,    at most $\frac{\pi(\ca)\pi(\cb)}{(\eps\Delta)^m}$ flow is routed into $\{\sigma \in \states_{m}:\sigma(i)=c_i, i \in [m]\}$. In other words, fix the colors $c_1, \ldots, c_m$. We want to bound the flow that passes through (into) the states of $\states_m$, where vertices $1,\ldots,m$ are colored with $c_1, \ldots, c_m$ respectively. We do this by induction on $m$. 
	
	{\bf The base case:} 
	
	The total flow from $\ca$ to $\cb$ through $\states_i$, for any $i$, is exactly $\pi(\ca)\pi(\cb)$. For any $c_1 \in [k]$, at most $\frac{\pi(\ca)\pi(\cb)}{\eps\Delta}$ flow is routed through $\{\sigma \in \states_{1}:\sigma(1)=c_1\}$. This is because the last time vertex $1$ changed color, at most $1/\eps\Delta$ of all the flow   was routed to states where $v$'s color is $c_1$. 
	
	{\bf The inductive step:}
	
	From the inductive hypothesis, at most  $\frac{\pi(\ca)\pi(\cb)}{(\eps\Delta)^{m-1}}$ flow is routed through $\{\sigma \in \states_{m-1}:\sigma(i)=c_i, i \in [m-1]\}$. From the construction of the canonical paths, for each of these states, at most  $1/\eps\Delta$ of the flow entering it flows to a state where vertex $m$ is colored $c_m$.
	
\end{proof}
We want to bound the total flow through a transition in any single phase. The following set of recoloring functions $\chi$ is useful. Let  $C$ be a set of (available) colors. $\chi_C$ is a function, parameterized by $C$, that takes as an input a  color $c \in [k]$. Its output is a color from $C$, such that each color in $C$ has the same number of pre-images, up to one.   We do not explicitly define $\chi$, only note that such a set of functions exists. For example, if $k=13$, $C=\{1,2,3,4,5\}$,  $\chi_C$ could allocate $(c \mod 5) +1$ to every $c \in [k]$, giving each color in $C$ either two or three pre-images. We make the following observation (recall we assume $\eps\Delta$, $(1+\eps)\Delta$ and $(1+\eps)\eps^{-1}$ are integers). 
\begin{obs}\label{obs:ell}
	For $\chi_C$ as defined above, if $k= (1+\eps)\Delta$ and $|C| \geq \eps\Delta$, each color in $C$ has at most $(1+\eps)\eps^{-1}$ pre-images in $[k]$.
\end{obs}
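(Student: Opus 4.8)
The plan is to unpack the defining property of $\chi_C$ and reduce to a one-line counting bound. Since $\chi_C:[k]\to C$ distributes its $k$ preimages among the $|C|$ colors of $C$ so that any two colors differ in preimage count by at most one, a standard pigeonhole argument shows that every color $c\in C$ has at most $\lceil k/|C|\rceil$ preimages: if some color had $\lceil k/|C|\rceil+1$ or more preimages, then the average over $C$ would exceed $\lceil k/|C|\rceil$, forcing some other color to have at most $\lceil k/|C|\rceil-1$ preimages, contradicting the ``up to one'' property. So the whole statement comes down to bounding $\lceil k/|C|\rceil$.

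For that, first observe that $\lceil k/|C|\rceil$ is nonincreasing in $|C|$, so the bound is worst when $|C|$ is as small as the hypothesis allows, namely $|C|=\eps\Delta$. Plugging in $k=(1+\eps)\Delta$ gives $k/|C|\le k/(\eps\Delta) = (1+\eps)\Delta/(\eps\Delta) = (1+\eps)/\eps = (1+\eps)\eps^{-1}$, and since we are assuming $(1+\eps)\eps^{-1}$ is an integer, $\lceil k/|C|\rceil \le \lceil (1+\eps)\eps^{-1}\rceil = (1+\eps)\eps^{-1}$. Combining the two displayed inequalities yields that each color of $C$ has at most $(1+\eps)\eps^{-1}$ preimages in $[k]$, as claimed.

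There is essentially no obstacle here: the only thing to be careful about is the ceiling/integrality bookkeeping (using the monotonicity in $|C|$ to pass to the extremal case $|C|=\eps\Delta$, and invoking the standing assumption that $(1+\eps)\eps^{-1}\in\mathbb{Z}$ so the ceiling is vacuous), and correctly restating the ``equidistributed up to one'' property of $\chi_C$ as the bound $\lceil k/|C|\rceil$ on the maximum fiber size. I would keep the write-up to two or three sentences, since a longer argument would only obscure what is a direct pigeonhole estimate.
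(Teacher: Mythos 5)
Your argument is essentially the natural one, and it reaches the right conclusion; the paper states this observation without proof, so there is no competing argument to compare against. The reduction to ``max fiber size $\le \lceil k/|C|\rceil$'' followed by ``$\lceil k/|C|\rceil$ is nonincreasing in $|C|$, so evaluate at $|C|=\eps\Delta$'' and then the integrality of $(1+\eps)\eps^{-1}$ is exactly the right bookkeeping.

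One small phrasing slip worth fixing: you write that if some color had $\lceil k/|C|\rceil+1$ or more preimages, ``then the average over $C$ would exceed $\lceil k/|C|\rceil$.'' That is not true --- the average is pinned at $k/|C|\le\lceil k/|C|\rceil$ no matter how the preimages are distributed. The correct version of the pigeonhole step is: if one color has $\ge\lceil k/|C|\rceil+1$ preimages, then, since $\lceil k/|C|\rceil+1 > k/|C|$ is strictly above the (fixed) average, some other color must lie strictly below the average, hence (by integrality) has $\le\lceil k/|C|\rceil-1$ preimages; the two fiber sizes then differ by at least two, violating the ``equal up to one'' property of $\chi_C$. Equivalently and more directly: if one fiber has $\lceil k/|C|\rceil+1$ elements, the ``up to one'' property forces every fiber to have at least $\lceil k/|C|\rceil$ elements, so the total is at least $|C|\lceil k/|C|\rceil+1\ge k+1>k$, a contradiction. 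Either fix makes the argument airtight; as written the sentence about the average is simply false, even though you land on the right conclusion.
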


Armed with Claim~\ref{claim:fuckme} and the functions $\chi$, we are now ready to bound the total flow through a transition in any single phase. 
\begin{lemma}\label{lem:fl}
	The flow $f_{j,t}$ routed through any $t=(\sigsig) \in F$ in any phase $j \in [n]$ satisfies $$f_{j,t} \leq \pi(\cdot)^2|\C_p| \cdot  \left( (1+\eps)\eps^{-1}\right) ^{2|S_j|},$$
	where $\pi(\cdot)$ is the probability of any state at stationarity.
\end{lemma}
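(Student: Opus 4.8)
The plan is to combine Claim~\ref{claim:fuckme} with a counting/injection argument analogous to the ``no fixing stage'' canonical-paths argument sketched in the introduction, but now accounting for the fact that the vertices of $\Q(j,t)$ are in a ``quantum state''. Fix a phase $j$ and a transition $t = (\sigsig) \in F$; by Observation~\ref{obs:1} this also fixes the step index $\ell$ and the set $\Q(j,t) \subseteq S_j$. The total flow through $t$ in phase $j$ is
$$f_{j,t} = \sum_{\cc} f_{j,t,\ca,\cb} \leq \sum_{\cc : f_{j,t,\ca,\cb} > 0} \frac{\pi(\ca)\pi(\cb)}{(\eps\Delta)^{|\Q(j,t)|}} = \frac{\pi(\cdot)^2}{(\eps\Delta)^{|\Q(j,t)|}} \cdot \#\{(\cc) : f_{j,t,\ca,\cb}>0\},$$
using that $\pi$ is uniform. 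So everything reduces to bounding the number of ordered pairs $(\cc)$ of proper colorings whose phase-$j$ flow through $t$ is positive.

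To bound that count I would exhibit an injection from $\{(\cc) : f_{j,t,\ca,\cb}>0, \; t \text{ at time } (j,\ell)\}$ into $\C_p \times \big([k]\big)^{|\Q(j,t)|}$ — or rather into $\C_p$ together with a choice of one of the $\chi_C$-preimage classes for each quantum vertex. The idea: from $\sigma$ (the source endpoint of $t$) we can read off the colors of all vertices \emph{except} those in $\Q(j,t)$; specifically, vertices in $A_j = \{v \le j\}$ are colored by $\cb$ and vertices in $B_j = V\setminus(A_j\cup S_j)$ are colored by $\ca$, as established just before Observation~\ref{obs:1}. Define $\sigma^*$ by taking $\cb$ on $A_j$, $\ca$ on $B_j$, and on each quantum vertex $u \in \Q(j,t)$ we would like to record enough to recover \emph{both} $\ca(u)$ and $\cb(u)$. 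We can afford to store, for each such $u$, which $\chi$-class $\cb(u)$ falls into relative to the available-color set at the moment $u$ is first recolored in phase $j$: by Observation~\ref{obs:ell} there are at most $(1+\eps)\eps^{-1}$ choices, and the actual color carried along the canonical path is determined by $\chi$; and separately which $\chi$-class $\ca(u)$ falls into (its value at the start of the phase), again at most $(1+\eps)\eps^{-1}$ choices. Making $\sigma^*$ a legitimate proper coloring on the quantum vertices (color each $u \in \Q(j,t)$ by some canonical available color, e.g. $\chi$ applied to a fixed reference) and bundling it with these $2|\Q(j,t)| \le 2|S_j|$ class-labels gives an injective encoding of $(\cc)$: from the encoding and $t$ (which fixes $j$, $\ell$, $S_j$, $\Q(j,t)$) we reconstruct $\ca$ and $\cb$ coordinatewise. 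Hence
$$\#\{(\cc) : f_{j,t,\ca,\cb}>0\} \;\le\; |\C_p| \cdot \big((1+\eps)\eps^{-1}\big)^{2|\Q(j,t)|} \;\le\; |\C_p| \cdot \big((1+\eps)\eps^{-1}\big)^{2|S_j|},$$
and plugging into the display above, the factor $(\eps\Delta)^{-|\Q(j,t)|}$ only helps, yielding $f_{j,t} \le \pi(\cdot)^2 |\C_p| \big((1+\eps)\eps^{-1}\big)^{2|S_j|}$ as claimed (in fact with room to spare).

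The main obstacle is making the injection genuinely well-defined and checking reversibility: one must argue that knowing $\sigma^*$, the labels, and $t$ really does pin down $\ca$ and $\cb$ uniquely, which requires care about vertices of $S_j$ that are \emph{not} in $\Q(j,t)$ (the case $S_{j-1} \subsetneq S_j$ noted in the text — such a vertex still has color $\ca(u)$ at time $(j,2)$ and is reconstructed from $\ca$'s part of the data), and about the interaction between consolidation at step~1 and the ``reverse order of last color change'' used in Claim~\ref{claim:fuckme}. A cleaner alternative that avoids re-deriving the encoding is to invoke Claim~\ref{claim:fuckme} as a black box and only count pairs: since $f_{j,t,\ca,\cb} > 0$ forces $\sigma$ to lie on the layered graph of $\gamma_{\cc}$ at time $(j,\ell)$, and $\sigma$ determines all non-quantum coordinates, the number of compatible $(\cc)$ is at most (choices of $\ca$ restricted to quantum vertices) $\times$ (choices of $\cb$ restricted to quantum vertices) $\times$ (one proper coloring's worth of freedom), and the $\chi$/Observation~\ref{obs:ell} bound caps each per-vertex factor at $(1+\eps)\eps^{-1}$. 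I would write it this second way and relegate the encoding details to a short paragraph.
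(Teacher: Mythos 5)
Your overall strategy --- bound $f_{j,t}$ by the per-pair ceiling from Claim~\ref{claim:fuckme} times the number of pairs $(\ca,\cb)$ routing positive flow through $t$ in phase $j$, and bound that count by an injective encoding into $\C_p$ times a few bounded alphabets --- is exactly the paper's. The injection you describe, however, is not injective. You correctly note that $t$ (via the state $\sigma$) already reveals $\cb$ on $A_j$, $\ca$ on $B_j$, and $\ca$ on $S_j\setminus\Q(j,t)$; the auxiliary proper coloring therefore must carry the \emph{complementary} data, namely $\ca$ on $A_j$ and $\cb$ on $B_j$ --- just as in the intro's ``no fixing stage'' sketch, where $\sigma^*$ is $\ca$ on the processed prefix and $\cb$ on the rest. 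You instead define $\sigma^*$ to be $\cb$ on $A_j$ and $\ca$ on $B_j$, i.e.\ the same information $\sigma$ already gives for free, so $\ca|_{A_j}$ and $\cb|_{B_j}$ appear nowhere in your encoding and $(\ca,\cb)$ cannot be reconstructed.

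There is a second hole even after fixing the swap. Your codomain only records $\cb$ on $\Q(j,t)$ (via per-vertex class labels), but $\cb$ must also be recovered on $S_j\setminus\Q(j,t)$: those vertices have not yet been recolored, so $\sigma$ gives $\ca(u)$ there but says nothing about $\cb(u)$, and nothing else in your encoding does either. (Your parenthetical about such vertices being ``reconstructed from $\ca$'s part of the data'' confirms you are worrying about recovering the wrong coloring.) The paper's $x$ encodes $\cb$ on the \emph{entire} set $S_j$ via $\chi$, spending one $(1+\eps)\eps^{-1}$-valued label per vertex of $S_j$ in $Y$ to invert it; that is exactly where the $((1+\eps)\eps^{-1})^{|S_j|}$ factor comes from. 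Relatedly, your compression of $\ca(u)$ for $u\in\Q(j,t)$ to a $(1+\eps)\eps^{-1}$-sized class is unjustified --- $\ca(u)$ can a priori be any of the $k$ colors with no canonical reference set --- and the paper avoids it by simply taking $Z=[k]^{|\Q(j,t)|}$ and letting the $(\eps\Delta)^{-|\Q(j,t)|}$ from Claim~\ref{claim:fuckme} absorb $k^{|\Q(j,t)|}$ down to $((1+\eps)\eps^{-1})^{|\Q(j,t)|}$; you discard that cancellation (``only helps'') and substitute an encoding you have not established. Your closing ``cleaner alternative'' inherits the same miscount, since it too tallies only quantum coordinates of $\cb$ rather than all of $S_j$.
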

\begin{proof}
	For each $(j,t)$, where $j \in [n]$ and $t=(\sigsig) \in F$, 
	denote by
	$\cp_{j,t}$ the set of pairs of states $\cc \in \C_p^2$ whose  paths pass through $t$ in phase $j$. 
	We describe a function $\mu_{j,t}$ whose domain is $\cp_{j,t}$. 
	We view the  co-domain of $\mu_{j,t}$  as the Cartesian product of $3$ sets $X$, $Y$ and $Z$:
	$\mu_{j,t}:\cp_{j,t} \rightarrow X \times Y \times Z$; the output of $\mu_{j,t}(\cdot)$ is a triple $(x,y,z)$. The function will be injective, therefore the size of the co-domain  of $\mu_{j,t}$ will serve as an upper bound to $|\cp_{j,t}|$.
	\medskip

	The sets $X,Y,Z$ are the following.

	\begin{itemize}
		\item $X$ is the set of all proper colorings. Assume that the input to $\mu_{j,t}$ is some pair $(\ca,\cb)$. 
		In the coloring specified by $x$, all vertices in $A_j$\footnote{As before, $A_j = \{v: v\leq j\}$, except for $\ell=0$, for which $A_j = \{v:v<j\}$.}  are  colored by $\ca$. All vertices in $B_j$ are colored by $\cb$. Before specifying the coloring of $S_j$ under  $x$, note that already, together with $j$ and $t$, this allows us to deduce $\alpha$ completely on all vertices in $V \setminus \Q(j,t)$ and $\beta$ on all vertices $V \setminus S_j$.
		To determine the colors of $S_j$ in $x$, we color them one at a time, using $\chi$. 
		This information, while not characterizing $\beta(v)$ completely for $v \in S_j$, allows allows us to restrict the possible value of $\beta(v)$ to a set of size at most $(1+\eps)\eps^{-1}$ possible values. 
		\item $Y$ is $\left[ (1+\eps)\eps^{-1}\right]^{|S_j|}$, allowing us to pinpoint $\beta(v)$ for every $v \in S_j$.
		\item Finally, $Z$ is simply all possible colorings of the vertices of $\Q(j,t)$ under $\alpha$. 
	\end{itemize}
	Clearly $x,y,z,j$ and $t$ allow us to recover $\ca$ and $\cb$.
	The size of the co-domain is at most $$|\C_p| \cdot  \left( (1+\eps)\eps^{-1}\right) ^{|S_j|} \cdot k^{|\Q(j,t)|}.$$
	Combining with Claim~\ref{claim:fuckme} we get that the total flow through any transition $t$ at phase $j$ is at most
	
	\begin{align*}f_{j,t} & \leq  |\C_p| \cdot  \left( (1+\eps)\eps^{-1}\right) ^{|S_j|} \cdot k^{|\Q(j,t)|}\cdot\frac{\pi(\cdot)\pi(\cdot)}{\left( \eps\Delta\right)^{|\Q(j,t)|} }\\
	&=  \pi(\cdot)^2|\C_p| \left( (1+\eps)\eps^{-1}\right) ^{|S_j|} \cdot  \frac{  ((1+\eps)\Delta)^{|\Q(j,t)|}}{\left( \eps\Delta\right)^{|\Q(j,t)|}}\\
	&\leq \pi(\cdot)^2|\C_p| \cdot  \left( (1+\eps)\eps^{-1}\right) ^{2|S_j|},\end{align*}
	where the last inequality is because $QS(j,t) \subseteq S_j$ for any $t$. 
\end{proof}

\subsubsection{The congestion of an edge}
We are ready to prove our main result of the section, that the congestion of any edge $(\sigsig) \in F$ under the flow defined by the canonical paths from proper coloring to proper colorings, is polynomial in the number of vertices.

\begin{lemma}\label{lem:cong}
	The congestion of any transition $t$ under $f$, when $f$ is restricted to flows from proper colorings to proper colorings, satisfies
	$$\rho_f(t) \leq 2k(\lambda+1)n^3\log{n}   \left( (1+\eps)\eps^{-1}\right) ^{2\pw(G)}.$$
\end{lemma}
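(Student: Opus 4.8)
The congestion formula (\ref{eq:congestion}) for a transition $t=(\sigsig)$ is
$\rho_f(t) = \frac{1}{q(t)}\sum_{\cc}\sum_{p\ni t,\,p\in\paths_{\cc}} f(p)|p|$.
I would bound each of the three ingredients separately: the path length $|p|$, the denominator $q(t)$, and the total flow $\sum f(p)$ summed over all pairs and all paths through $t$.

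The plan is to substitute the three ingredients of the congestion formula (\ref{eq:congestion}) into one another: the uniform bound $\length \le (\lambda+1)(n\log n)$ on the length of every canonical path, the explicit value of the ergodic flow $q(t)$, and the per-phase flow bound of Lemma~\ref{lem:fl}. Since the heavy lifting is already in Claim~\ref{claim:fuckme} and Lemma~\ref{lem:fl}, this step is essentially bookkeeping; the only subtlety is accounting correctly for a path that passes through $t$ in several phases.

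First I would pin down $q(t)$. As $\mc$ is symmetric, its stationary distribution is uniform, so $\pi(\cf) = 1/|\Omega|$ for every state, and from the description of the chain a non-idle transition $t=(\sigsig)$ occurs with probability $P(\cf,\cl) = \tfrac12\cdot\tfrac1n\cdot\tfrac1k$ (choose not to be lazy, then the correct vertex, then the correct color, and note this $(v,c)$ is unique since $\cf,\cl$ differ in exactly one coordinate). Hence $q(t) = \pi(\cf)P(\cf,\cl) = \tfrac{1}{2nk|\Omega|}$; self-loops only make $q(t)$ larger, so they are not the bottleneck.

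Next I would bound the numerator $\sum_{\cc}\sum_{p:\,t\in p\in\paths_{\cc}} f(p)|p|$. Every canonical path has the same length $|p| = \length \le (\lambda+1)(n\log n)$, so this sum is at most $(\lambda+1)(n\log n)$ times the total flow routed through $t$, aggregated over all pairs of proper colorings and over all phases. A given path may use $t$ in more than one of its $n$ phases, but that is harmless: Lemma~\ref{lem:fl} bounds, for each \emph{fixed} phase $j$, the quantity $f_{j,t}$ — the flow through $t$ during phase $j$ summed over all pairs $\cc$ — by $\pi(\cdot)^2|\C_p|\big((1+\eps)\eps^{-1}\big)^{2|S_j|}$. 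Summing over the $n$ phases, and using $|S_j| = |\mvs(G,L,j)| \le \vsn(G,L) = \vsn(G) = \pw(G)$ (the last equality is Theorem~\ref{Kinner}, since $L$ is a minimal order), the total flow through $t$ is at most $n\,\pi(\cdot)^2|\C_p|\big((1+\eps)\eps^{-1}\big)^{2\pw(G)}$.

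Finally I would assemble the pieces. Dividing by $q(t)$ multiplies by $2nk|\Omega|$, and since $\pi(\cdot)^2|\C_p| = |\C_p|/|\Omega|^2 \le 1/|\Omega|$ (because $\C_p \subseteq \Omega$), this gives
\[
\rho_f(t) \le 2nk|\Omega|\cdot (\lambda+1)(n\log n)\cdot n\cdot \frac{1}{|\Omega|}\cdot\big((1+\eps)\eps^{-1}\big)^{2\pw(G)} = 2k(\lambda+1)n^3\log n\,\big((1+\eps)\eps^{-1}\big)^{2\pw(G)},
\]
which is exactly the claimed bound. The main obstacle is therefore not computational but conceptual: one must be sure that the per-phase estimate of Lemma~\ref{lem:fl} really does control the full contribution of phase $j$ over all source/target pairs (it does, via the injective map $\mu_{j,t}$ built there), so that multi-phase traversal costs only the benign factor $n$ already written into the statement.
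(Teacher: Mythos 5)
Your proposal is correct and follows essentially the same route as the paper: bound the uniform path length by $(\lambda+1)n\log n$, compute $q(t)=\frac{1}{2kn|\Omega|}$ from the uniform stationary distribution, decompose the flow through $t$ by phase and apply Lemma~\ref{lem:fl} phase by phase, then sum over the $n$ phases using $|S_j|\le\vsn(G)=\pw(G)$ (Theorem~\ref{Kinner}) and $|\C_p|/|\Omega|\le 1$. The bookkeeping matches the paper's chain of inequalities exactly.
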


\begin{proof}
	From the definition of the congestion on an edge (Equation~\eqref{eq:congestion}), we have 
	
	\begin{subequations}
		\begin{align}
		\rho_f(t) &= \frac{1}{q(t)} \sum_{\cc \in \Omega} \sum_{p:t \in p \in \paths(\cc)} f(p)|p| \notag\\
		&\leq \frac{(\lambda+1)n\log{n}}{q(t)} \sum_{\cc \in \Omega} \sum_{p:t \in p \in \paths(\cc)} f(p)\label{eq1}\\
		&= 2|\Omega|k(\lambda+1)n^2\log{n} \sum_{\cc \in \Omega} \sum_{p:t \in p \in \paths(\cc)} f(p) \label{eq2}\\
		&= 2|\Omega|k(\lambda+1)n^2\log{n}\sum_{j =1}^n \sum_{\cc \in \Omega} f_{j,t,\ca,\cb}\label{eq3}\\
		&= 2|\Omega|k(\lambda+1)n^2\log{n}\sum_{j =1}^n  f_{j,t} \notag \\
		&\leq 2|\Omega|k(\lambda+1)n^2\log{n}\sum_{j =1}^n \pi(\cdot)^2|\C_p| \cdot  \left( (1+\eps)\eps^{-1}\right) ^{2|S_j|} \label{eq5}\\
		&= \frac{2 |\C_p|k(\lambda+1)n^2\log{n}}{|\Omega|} \sum_{j =1}^n  \left( (1+\eps)\eps^{-1}\right) ^{2|S_j|} \notag\\
		&\leq 2k(\lambda+1)n^3\log{n}   \left( (1+\eps)\eps^{-1}\right) ^{2\pw(G)}.\notag
		\end{align}
	\end{subequations}
	Inequality~\eqref{eq1} is because the length of any canonical path is at most $(\lambda+1)n\log{n}$; Equality~\eqref{eq2} is due to the definition of $q$: $q(\sigma,\sigma') = \pi(\sigma)P(\sigma, \sigma')$, where $\pi(\sigma)=|\Omega|^{-1}$ and $P(\sigma, \sigma')= (2kn)^{-1}$;
	Equality~\eqref{eq3} is simply a rephrasing that holds  because	$$\sum_{\cc \in \Omega} \sum_{p:t \in p \in \paths(\cc)} f(p)$$ is  the flow through $t$ under $f$; Inequality~\eqref{eq5} is due to Lemma~\ref{lem:fl}. The final inequality is due to Theorem~\ref{Kinner}, as the pathwidth of a graph equals its vertex separation number.
\end{proof}

\subsection{Mixing time}\label{sec:genflows}
Lemma~\ref{lem:cong} applies to the congestion from flow between proper colorings only. We extend this result to all of $f$. We rephrase our main lemma:

\begin{mainlemma}
	The congestion of any transition $t$ under $f$ satisfies
	$$\rho_f(t) \leq 8k^3(\lambda+1)n^5\log{n}   \left( (1+\eps)\eps^{-1}\right) ^{2\pw(G)}.$$
\end{mainlemma}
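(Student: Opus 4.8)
The plan is to extend the congestion bound of Lemma~\ref{lem:cong} from flows between proper colorings to flows involving singly-flawed colorings, at the cost of a polynomial factor. The key idea is to reuse Corollary~\ref{cor:same} and Corollary~\ref{cor:sur}: there is a map $g'$ that sends each singly-flawed coloring to a proper coloring with at most $kn$ pre-images, and the total number of singly-flawed colorings is at most $kn|\C_p|$. So I would define, for a pair $(\ca,\cb)$ where one or both endpoints are singly-flawed, a canonical flow obtained by first routing (in one or two extra ``cleanup'' steps) from $\ca$ to the proper coloring $g'(\ca)$ — i.e., recoloring the flawed vertex of $\ca$ to an available color, splitting the flow evenly among available colors if there is more than one choice — then using the already-defined proper-to-proper canonical paths from $g'(\ca)$ to $g'(\cb)$, then reversing the cleanup from $g'(\cb)$ to $\cb$. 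Since $g'$ is a fixed deterministic map this is well-defined; the extra prefix/suffix adds only $O(1)$ to the path length and at most a factor $k$ to any single flow amount at the junction.

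Next I would bound the congestion of an arbitrary transition $t$ under the full flow $f$. Transitions come in two kinds: those internal to the proper-to-proper portion, and those in the cleanup prefixes/suffixes. For an internal transition, the flow through $t$ is a sum over all pairs $(\ca,\cb)\in\Omega^2$ of the flow $(\ca,\cb)$ sends through $t$; grouping pairs by $(g'(\ca),g'(\cb))$, each proper pair $(\ca',\cb')$ receives flow from at most $(kn)^2$ singly-flawed-or-proper pairs, and because $\pi$ is uniform, $\pi(\ca)\pi(\cb)=\pi(\ca')\pi(\cb')$, so the flow through $t$ grows by at most a factor $(kn)^2$ relative to the proper-only case analyzed in Lemma~\ref{lem:cong}. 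For a cleanup transition — recoloring the flawed vertex of some coloring — I would bound the number of pairs whose path uses it and the flow each carries by an argument analogous to Lemma~\ref{lem:fl} but much simpler, since only one vertex is being adjusted; the congestion here is dominated by the proper-to-proper bound times a polynomial factor. In both cases $q(t)=\pi(\cdot)P(\cdot,\cdot)$ is unchanged (still $(2kn|\Omega|)^{-1}$), and $|\Omega|=|\C_p|+|\C_{sf}|\le (kn+1)|\C_p|$ by Corollary~\ref{cor:same}, which is absorbed into the polynomial slack. Collecting the factors $(kn)^2$ and the extra $|\Omega|/|\C_p|\le kn+1$, and comparing with the bound $2k(\lambda+1)n^3\log n\,((1+\eps)\eps^{-1})^{2\pw(G)}$ from Lemma~\ref{lem:cong}, gives $\rho_f(t)\le 8k^3(\lambda+1)n^5\log n\,((1+\eps)\eps^{-1})^{2\pw(G)}$ after bounding constants.

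I expect the main obstacle to be a careful accounting at the junction states $g'(\ca)$ and $g'(\cb)$, where cleanup flow meets the proper-to-proper flow: one must check that consolidating up to $(kn)^2$ pairs onto each proper pair does not inflate the congestion of the first or last internal transition of a proper-to-proper path beyond the claimed factor, and that the cleanup transitions themselves — which may be shared across many singly-flawed endpoints mapping to the same proper coloring — are not over-congested. The clean way to handle this is to verify that the flow entering any junction state is at most $kn$ times $\pi(\cdot)\pi(\cdot)$ summed over the relevant proper pairs (since $g'$ has at most $kn$ pre-images), which feeds directly into the $(kn)^2$ estimate; the path-length factor $(\lambda+1)n\log n$ in Equation~\eqref{eq1} is unaffected up to an additive $O(1)$. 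Everything else is the same bookkeeping as in Lemma~\ref{lem:cong}, so the remaining work is purely to confirm that the accumulated polynomial factors fit inside the stated constant $8$ and power $n^5$.
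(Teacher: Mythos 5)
Your proposal follows essentially the same approach as the paper's proof: extend the proper-to-proper flow via the deterministic map $g'$ of Corollary~\ref{cor:sur} (a one-step cleanup edge per singly-flawed endpoint), observe that uniformity of $\pi$ means each proper pair $(g'(\ca),g'(\cb))$ absorbs the flow of at most $(kn+1)^2 < 4k^2n^2$ original pairs, account for the cleanup edges separately, and multiply the bound of Lemma~\ref{lem:cong} by this factor. One small point to tidy up: you mention ``splitting the flow evenly among available colors'' in the cleanup step, but since $g'$ is a fixed deterministic map the whole cleanup flow should be routed on the single edge $(\ca, g'(\ca))$; the splitting remark is unnecessary and, if taken literally, would be inconsistent with then routing from $g'(\ca)$ specifically.
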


\begin{proof}
	
	We use the function $g'$ from singly-flawed to proper colorings described in Corollary~\ref{cor:sur} to define the flows that have a singly-flawed coloring as (at least) one of their endpoints. 
	For every $\cc$ such that  $\ca \in \C_{sf}$ and $\cb \in \C_p$, we route the entire flow on the transition  $(\ca, g'(\ca))$ and then proceed using the canonical paths described above for routing the flow from $g'(\ca)$ to $\cb$. If   $\ca \in \C_{p}$, and $\cb \in \C_{sf}$, we route the flow from $\ca$ to $g'(\cb)$ using the canonical paths above and then on the edge $(g'(\cb), \cb)$. Finally, if $\cc \in \C_{sf}$, we route the entire flow on  $(\ca, g'(\ca))$, use the canonical paths above to route from $g'(\ca)$ to $g'(\cb)$ and finally route the entire flow on $(g'(\cb), \cb)$. 
	For every state $\sigma \in \C_p$, there are at most $kn$ states $\sigma' \in\C_{sf}:g'(\sigma')=\sigma$. Therefore, we have multiplied the flow on every edge by at most \begin{equation}\label{g}
	k^2n^2+2kn+1< 4k^2n^2,
	\end{equation} where the first term is for pairs $\cc \in \C_{sf}$, the third is for $\cc \in \C_p$, and the second term on the left hand side is for mixed pairs. We added a further 
	\begin{equation}\label{eqj}
	kn \pi(\cdot)^2|\Omega| = \frac{2kn}{|\Omega|}<1\end{equation} to each edge  $(\sigma, g'(\sigma))$ and $(g'(\sigma), \sigma)$: there are at most $|\Omega|$ paths from a state $\sigma \in \C_p$ (to any other state), hence at most $kn|\Omega|$ paths from any $\sigma' \in \C_{sf}$. We absorb Inequality~\eqref{eqj} and the fact that $\pw(G) +\log{n} = (\lambda+1)\log{n}$  into Inequality~\eqref{g}. Multiplying the bound of Lemma~\ref{lem:cong} by $4k^2n^2$ gives the required bound.
\end{proof}

\section{The sampling algorithm}
In order to sample a proper coloring, we need to execute the Markov chain sufficiently many times to guarantee that w.h.p. it outputs a proper coloring, and when it does, return that coloring. The pseudo code is given as Algorithm~\ref{alg1}. For graphs of pathwidth bounded by $O(\log{n})$, the algorithm runs  time polynomial in $n$ and $\log{\delta}$, where $\delta$ is the required bias parameter.

\vspace{10pt}
\begin{algorithm}[H]\label{alg1}
	\caption{An almost-uniform sampler for proper colorings}\label{alg:mst}
	\SetKwInOut{Input}{Input}\SetKwInOut{Output}{Output}\SetKwInOut{Inquiry}{Inquiry}
	
	\Input{$G=(V,E)$ with maximal degree $\Delta$, a number of colors $k \geq \Delta+2$, a bias parameter $\delta>0$ }
	\Output{a proper $k$-coloring of $G$}

	\BlankLine
	Set $\eps=\lceil\frac{k}{\Delta}\rceil$\;
	Set $\delta_1 = \delta/(kn+1)^2$\;
	Set $T=\lceil \ln(3/\delta)(kn+2)^2\rceil$\;
	
	\For{$t=1$ to $T $}
	{
		Simulate $\mc(G,\eps)$ for $\tau(\delta_1)$ steps, starting from an arbitrary proper coloring\;
		If the final state $\sigma$ is a proper coloring, return $\sigma$; 
	}
	Return an arbitrary proper coloring\;
\end{algorithm}
\vspace{15pt}

\begin{theorem}
	Algorithm~\ref{alg1} is a fully polynomial almost uniform sampler for proper colorings with bias parameter $\delta$.
\end{theorem}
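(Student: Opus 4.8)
The plan is to establish two facts and then combine them with Theorem~\ref{thm:sinclair} and the results already proved. First I would show that conditioned on the event that Algorithm~\ref{alg1} outputs a coloring from some iteration of the loop (as opposed to the final catch-all line), the output distribution is close to uniform on $\C_p(G,k)$. Concretely, after simulating $\mc(G,\eps)$ for $\tau(\delta_1)$ steps, the chain's state is distributed according to a distribution $D'$ on $\Omega$ with $d_{TV}(D',\pi)\leq\delta_1$, where $\pi$ is uniform on $\Omega = \C_p \cup \C_{sf}$. Conditioning on landing in $\C_p$ rescales to a distribution $D$ on $\C_p$; since $|\C_{sf}|\leq kn|\C_p|$ by Corollary~\ref{cor:same}, we have $\pi(\C_p)\geq 1/(kn+1)$, so conditioning blows up the total-variation distance by a factor of at most $1/\pi(\C_p)\leq kn+1$, giving $d_{TV}(D,U)\leq (kn+1)\delta_1 = \delta/(kn+1)$, using the choice $\delta_1 = \delta/(kn+1)^2$. (I should be slightly careful: conditioning a measure within TV distance $\delta_1$ of $\pi$ on an event of $\pi$-probability $p$ yields something within roughly $2\delta_1/p$ of the conditional of $\pi$; the constants chosen in the algorithm, with $(kn+2)^2$ rather than $(kn+1)^2$ appearing in $T$, leave ample slack to absorb this factor of $2$.)

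Second I would bound the probability that all $T$ iterations fail, i.e., that every simulation ends in a singly-flawed coloring. In a single iteration, the probability of ending in $\C_p$ is $D'(\C_p)\geq \pi(\C_p)-\delta_1 \geq 1/(kn+1) - \delta_1$, which for the parameter choices here is at least, say, $1/(2(kn+1)) \geq 1/(kn+2)$. Hence the probability that a single iteration fails is at most $1 - 1/(kn+2)$, and the probability that all $T$ iterations fail is at most $(1-1/(kn+2))^T \leq e^{-T/(kn+2)} \leq e^{-\ln(3/\delta)(kn+2)} \leq \delta/3$, using $T = \lceil\ln(3/\delta)(kn+2)^2\rceil$. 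So with probability at least $1-\delta/3$ the algorithm returns a coloring drawn (conditionally) from $D$.

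Combining: the output distribution $D_{\mathrm{out}}$ of the algorithm is a mixture — with probability $\geq 1-\delta/3$ it is $D$ (itself within $\delta/(kn+1)\leq\delta/3$ of $U$), and with probability $\leq\delta/3$ it is something arbitrary supported on $\C_p$. A mixture $(1-\theta)D + \theta D''$ with $\theta\leq\delta/3$ and $d_{TV}(D,U)\leq\delta/3$ satisfies $d_{TV}(D_{\mathrm{out}},U)\leq \theta + d_{TV}(D,U) \leq 2\delta/3 \leq \delta$, so the output meets the required bias. For the running time: by the Main Theorem the mixing time $\tau(\delta_1)$ of $\mc(G,\eps)$ is $O\!\big(\pw(G)(1+\eps)^3\Delta^3 n^5 ((1+\eps)\eps^{-1})^{2\pw(G)}(n\log n + \ln\delta_1^{-1})\big)$, and $\ln\delta_1^{-1} = O(\ln(n/\delta))$, so each iteration is polynomial in $n$ and $\log(1/\delta)$ when $\pw(G)=O(\log n)$; the number of iterations $T = O((kn)^2\ln(1/\delta))$ is likewise polynomial, and each step of the chain (choosing $v$, $c$, checking membership in $\Omega$) is polynomial-time; hence the whole algorithm is fully polynomial.

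The main obstacle is the conditioning step: one must argue cleanly that conditioning a near-stationary distribution on the (not-too-unlikely) event $\C_p$ does not amplify the error too much, and verify that the specific constants baked into Algorithm~\ref{alg1} ($\delta_1=\delta/(kn+1)^2$, $T=\lceil\ln(3/\delta)(kn+2)^2\rceil$) are in fact chosen to swallow the factor-of-$2$ losses from both the conditioning bound and the mixture bound. Everything else is routine: the per-iteration success probability and the geometric failure bound are elementary, and the polynomiality is immediate from the Main Theorem together with $\pw(G)=O(\log n)$. I would also note in passing that the chain is started from an \emph{arbitrary} proper coloring, which is why $\ln\pi(\sigma_0)^{-1}=O(n\log n)$ enters the mixing-time bound rather than anything worse.
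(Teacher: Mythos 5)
Your proof is correct and follows essentially the same structure as the paper's: bound the per-iteration acceptance probability using Corollary~\ref{cor:same} and the TV gap $\delta_1$, bound the probability that all $T$ iterations fail, and combine. The only cosmetic differences are that you package the final step as a conditioning/mixture total-variation bound (the paper instead runs a direct chain of inequalities on $\Pr[\sigma\in S]$ and gets the matching upper bound by complementarity on $\C_p\setminus S$), and that you use an elementary geometric failure bound where the paper invokes Hoeffding; both routes close with the stated constants.
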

\begin{proof}
	We denote by $\hat{\pi}$  the distribution reached by $\mc$ after $\tau(\delta_1)$ steps. By definition, the total variation distance between and $\pi$ and $\hat{\pi}$  is at most $\delta_1$, hence  for any $S \subset \Omega$, it holds that
	\begin{equation}\label{eqrr} |\pi(S)-\hat{\pi}(S)| \leq \delta_1.\end{equation}
	Choosing $S=\C_p$  and applying Corollary~\ref{cor:same} gives that the probability of the final state being a proper coloring is at least $\frac{1}{kn+1}-\delta_1$. 
	Our choice of $T$ is so that Hoeffding's bound guarantees that Algorithm~\ref{alg1} will output a proper coloring during the {\bf for} loop (i.e., a final state of the Markov chain and not an arbitrary coloring), with probability at least $1-\delta_H$, where $\delta_H \leq \frac{\delta}{3}$. 
	To show that Algorithm~\ref{alg1} is an almost uniform sampler for proper colorings, we need to show that the sampled coloring is drawn from a distribution that is close to uniform. In other words, if $\sigma$ is the state output by Algorithm~\ref{alg1}, then for any $S \subseteq \C_p$
	$$\frac{\pi(S)}{\pi(\C_p)} - \delta \leq  \Pr[\sigma \in S] \leq \frac{\pi(S)}{\pi(\C_p)} + \delta.$$
	
	\begin{subequations}
		\begin{align}
		\Pr[\sigma \in S] &\geq \frac{\hat{\pi}(S)}{\hat{\pi}(\C_p)}(1-\delta_H)\label{q1}\\
		& \geq \frac{\hat{\pi}(S)}{\hat{\pi}(\C_p)}-\delta_H \notag\\
		& \geq \frac{\pi(S)-\delta_1}{\pi(\C_p)+\delta_1}-\delta_H \label{q2}\\
		& \geq \frac{\pi(S)-2\delta_1}{\pi(\C_p)}-\delta_H \label{q3}\\
		& =\frac{\pi(S)}{\pi(\C_p)}-\frac{2\delta_1}{\pi(\C_p)}-\delta_H \notag\\
		&\geq \frac{\pi(S)}{\pi(\C_p)}-\frac{2\delta}{3}-\frac{\delta}{3}, \notag
		\end{align}	 
	\end{subequations}
	where~\eqref{q1} is the probability that the Algorithm outputs a proper coloring and it is in $S$; ~\eqref{q2} is due to Equation~\eqref{eqrr}; ~\eqref{q3} is because $\frac{a-1}{b+1} \geq \frac{a+2}{b}$ for $b \geq a$.

	The complementary $\Pr[\sigma \in S] \leq \frac{\pi(S)}{\pi(\C_p)} + \delta$ is immediate by considering the set $\C_p \setminus S$: if this were not the case then it would hold that $\Pr[\sigma \in S]+\Pr[\sigma \in \C_p \setminus S] >1$.
	
\end{proof}

\paragraph{Acknowledgements} 
We thank Leonard Schulman for valuable discussions, and Adam Wierman and Leonard Schulman for comments on earlier versions of this draft.

\bibliographystyle{plain}\bibliography{refs}

\end{document}